\newcommand{\mean}[1]{\langle#1\rangle}
\def\id{\leavevmode\hbox{\small1\kern-3.8pt\normalsize1}}
\def\identity{\leavevmode\hbox{\small1\kern-3.8pt\normalsize1}}
\newtheorem{mydef}{Definition}
\renewcommand{\epsilon}{\varepsilon}
\newtheorem{definition}{Definition} 
\newtheorem{prop}[definition]{Proposition}
\newtheorem{lemma}[definition]{Lemma}
\newtheorem*{thm}{Theorem}
\newtheorem*{rep@theorem}{\rep@title}
\newcommand{\newreptheorem}[2]{%
\newenvironment{rep#1}[1]{%
 \def\rep@title{#2 \ref{##1} (restatement)}%
 \begin{rep@theorem}}%
 {\end{rep@theorem}}}
\def\ba#1\ea{\begin{align}#1\end{align}}
\def\ban#1\ean{\begin{align*}#1\end{align*}}
\newcommand{\ot}{\otimes}
\newcommand{\be}{\begin{equation}}
\newcommand{\ee}{\end{equation}}
\def\benum{\begin{enumerate}}
\def\eenum{\end{enumerate}}
\newcommand{\ket}[1]{|#1\rangle}
\newcommand{\<}{\langle}
\renewcommand{\>}{\rangle}
\def\id{{\operatorname{id}}}
\def\be{\begin{equation}}
\def\ee{\end{equation}}
\def\ben{\begin{eqnarray}}
\def\een{\end{eqnarray}}
\def\ot{\otimes}
\def\bei{\begin{itemize}}
\def\eei{\end{itemize}}
\mathchardef\ordinarycolon\mathcode`\:
\def\vcentcolon{\mathrel{\mathop\ordinarycolon}}
\newcommand{\nc}{\newcommand}
 \nc{\proj}[1]{|#1\rangle\!\langle #1 |} 
\nc{\avg}[1]{\langle#1\rangle}
\nc{\conv}{\operatorname{conv}}
\nc{\smfrac}[2]{\mbox{$\frac{#1}{#2}$}} \nc{\Tr}{\operatorname{Tr}}
\nc{\ox}{\otimes} \nc{\dg}{\dagger} \nc{\dn}{\downarrow}
\nc{\lmax}{\lambda_{\text{max}}}
\nc{\lmin}{\lambda_{\text{min}}}
\nc{\csupp}{{\operatorname{csupp}}}
\nc{\qsupp}{{\operatorname{qsupp}}} \nc{\var}{\operatorname{var}}
\nc{\rar}{\rightarrow} \nc{\lrar}{\longrightarrow}
\nc{\poly}{\operatorname{poly}}
\nc{\polylog}{\operatorname{polylog}} \nc{\Lip}{\operatorname{Lip}}
\nc{\Om}{\Omega}
\nc{\wt}[1]{\widetilde{#1}}
\def\>{\rangle}
\def\<{\langle}
\nc{\glneq}{{\raisebox{0.6ex}{$>$}  \hspace*{-1.8ex} \raisebox{-0.6ex}{$<$}}}
\nc{\gleq}{{\raisebox{0.6ex}{$\geq$}\hspace*{-1.8ex} \raisebox{-0.6ex}{$\leq$}}}
\nc{\vholder}[1]{\rule{0pt}{#1}}
\nc{\wh}[1]{\widehat{#1}}
\nc{\h}[1]{\widehat{#1}}
\nc{\ob}[1]{#1}
\def\beq{\begin {equation}}
\def\eeq{\end {equation}}
\def\be{\begin{equation}}
\def\ee{\end{equation}}
\nc{\eq}[1]{(\ref{eq:#1})} 
\nc{\eqs}[2]{\eq{#1} and \eq{#2}}
\nc{\eqn}[1]{Eq.~(\ref{eqn:#1})}
\nc{\eqns}[2]{Eqs.~(\ref{eqn:#1}) and (\ref{eqn:#2})}
\newcommand{\Xs}{x}
\newcommand{\mi}{i}
\newcommand{\mj}{j}
\newcommand{\mk}{k}
\newcommand{\ml}{l}
\newcommand{\nvs}{O}
\nc{\region}{\cS\cW}
\newenvironment{protocol*}[1]
  {
    \begin{center}
      \hrulefill\\
      \textbf{#1}
  }
  {
    \vspace{-1\baselineskip}
    \hrulefill
    \end{center}
  }
\begin{document}
\title{On the tightness of correlation inequalities with no quantum violation}
\author{Ravishankar \surname{Ramanathan}}
\email{ravishankar.r.10@gmail.com}
\affiliation{National Quantum Information Center of Gda\'nsk,  81-824 Sopot, Poland}
\affiliation{Institute of Theoretical Physics and Astrophysics, University of Gda\'{n}sk, 80-952 Gda\'{n}sk, Poland}
\author{Marco T\'ulio Quintino}
\affiliation{D\'epartement de Physique Th\'eorique, Universit\'e de Gen\`eve, 1211 Gen\`eve, Switzerland}
\author{Ana Bel\'en Sainz}
\affiliation{H. H. Wills Physics Laboratory, University of Bristol, Tyndall Avenue, Bristol, BS8 1TL, United Kingdom}
\author{Gl\'aucia Murta}
\affiliation{Departamento de Fisica, Universidade Federal de Minas Gerais, Caixa Postal 702, 30123-970, Belo Horizonte, MG, Brazil}
\affiliation{QuTech, Delft University of Technology, Lorentzweg 1, 2628 CJ Delft, The Netherlands}
\author{Remigiusz Augusiak}
\affiliation{Center for Theoretical Physics, Polish Academy of Sciences, Aleja Lotnik\'ow 32/46, 02-668 Warsaw, Poland}

\begin{abstract}

We study the faces of the set of quantum correlations, i.e., the Bell and noncontextuality inequalities without any quantum violation.  
First, we investigate the question whether every proper (tight) Bell inequality for two parties, other than the trivial ones from positivity, normalization and no-signaling can be violated by quantum correlations, i.e., whether the classical Bell polytope or the smaller correlation polytope share any facets with their respective quantum sets. To do this, we develop {a recently derived bound on the quantum value of linear games} based on the norms of game matrices to give a simple sufficient condition to identify linear games with no quantum advantage. {Additionally we show how this bound can be extended to the general class of unique games.} We then show that the paradigmatic examples of correlation Bell inequalities with no quantum violation, namely the non-local computation games do not constitute tight Bell inequalities, not even for the correlation polytope. We also extend this to an arbitrary prime number of outcomes for a specific class of these games. 
We then study the faces in the simplest CHSH Bell scenario of binary dichotomic measurements, and identify edges in the set of quantum correlations in this scenario.

Finally, we relate the non-contextual polytope of single-party correlation inequalities with the cut polytope $CUT(\nabla G)$, where $G$ denotes the compatibility graph of observables in the contextuality scenario and $\nabla G$ denotes the suspension graph of $G$. We observe that there exist tight non-contextuality inequalities with no quantum violation, and furthermore that this set of inequalities is beyond those implied by the Consistent Exclusivity principle.
\end{abstract}

\maketitle
\section{Introduction}

In recent years there has been a growing interest in non-local quantum correlations, i.e., 
correlations generated in a Bell-type experiment which violate Bell inequalities \cite{Bell}.
On the one hand, their existence shows that quantum and classical mechanics 
fundamentally depart one from another, and, on the other hand, 
non-locality has been turned into a powerful resource allowing 
to perform tasks not known in classical physics. These are, for instance,
generation of cryptographic key that is secure against 
even post-quantum eavesdroppers \cite{securekey}, reduction of communication complexity \cite{CommCompl}, or
true randomness certification and amplification \cite{rand}.
{Therefore, for fundamental reasons as well as for applications,}  detection and characterization of the set of non-local correlations is of utmost importance in quantum information theory. 
{This not only restricts to correlations in Bell scenarios but also applies to general contextuality ones \cite{KS,AB,AFLS}, 
where contextual behaviours (i.e. conditional probability distributions stronger than allowed {by deterministic noncontextual models})
prove to be a necessary resource for quantum computation \cite{contcomp}.}

{In a bipartite Bell scenario, }we consider correlations between two parties Alice and Bob, who can perform $m_A, m_B$ measurements, and obtain $d_A, d_B$ outcomes respectively. A box $\mathcal{P}$ describes the set of conditional probability distributions $P(a,b|x,y)$ that can be obtained in such a Bell experiment, $\mathcal{P}$ is usually expressed as a real vector in $\mathbb{R}^{m_A m_B d_A d_B}$. The constraints of normalization, no-signaling and non-negativity of probabilities {imposed on $\mathcal{P}$ define the set of no-signaling boxes which constitutes a polytope in a reduced dimension. } Contained within the no-signaling polytope is the set of interest, namely that of the quantum correlations obtainable {from measurements in a bipartite quantum state,} and inside this set is the classical polytope of correlations realizable in a local realistic theory. For most of the applications mentioned previously and for fundamental reasons, we are interested in studying how the set of quantum correlations fits in between the classical and no-signaling polytopes. 

A proper (tight) Bell inequality is a facet of the classical polytope, which is not also a facet of the no-signaling one, i.e., one that does not merely rephrase the no-signaling, positivity and normalization conditions. In \cite{Gill}, Gill posed as an open question whether every proper (tight) Bell inequality is violated by quantum correlations. As a variant of the question, Avis et al. \cite{AII06} ask whether the correlation polytope in the binary outcome scenario (the set of correlations $\langle A_i B_j \rangle$ observable in a two-party experiment, not including the local terms $\langle A_i \rangle$, $\langle B_j \rangle$ themselves) shares any facets in common with the set of quantum correlations. In \cite{GYNI} it was shown that if three or more parties are involved in the Bell experiment, there exist multipartite tight Bell inequalities that do not allow for quantum violation. {These inqualities} led to the formulation of an information-theoretic principle called local orthogonality {(LO) \cite{LO}}, which serves to bound the set of correlations realizable in a physical theory. 
{In the bipartite case  LO reduces to the non-signaling condition, therefore bringing no nontrivial constraints to the set of bipartite correlations, and 
 all of the tight bipartite Bell inequalities found so far in the literature (as listed in \cite{facets}) are known to be violated in quantum theory. }
 {In the multipartite scenario it remains an open problem whether all facet-defining nontrivial Bell inequalities are of the local orthogonality form.}
{Let us remark that, beyond the scope of this paper, other questions on how the quantum set fits between the classical and no-signaling ones have also been intensively studied; in particular unbounded violations of Bell inequalities, i.e., large separations between the classical and quantum sets \cite{unb-viol}.} 

{The questions of how the quantum set of behaviours fits between the classical one and those explained by more general theories is one main research topic as well in contextuality \cite{KS}. Contextuality is a curious phenomenon that also signals the 
nonclassicality of quantum theory, and may be understood as a more general case of nonlocality 
scenarios that includes measurements on a single system \cite{AB, AFLS} 
 (there exist other inequivalent approaches to contextuality that will not be discussed in this work like \textit{preparation and measurement contextuality by
Spekkens \textit{et. al.} 
\cite{spekkens})}. Different aspects of the phenomenon have been explored in the literature \cite{adanrays,AQB+13,horodecki_measure}, 
and special interest has been put into understanding the boundary between quantum and more general theories as well. One example of 
this is the extension of LO to contextuality scenarios, called Consistent Exclusivity principle, that serves to bound
(although not tightly) the set of quantum behaviours. In particular, there is the similar open question of whether every facet of the 
classical polytope that coincides with one of the quantum set is of the CE form. }

In this paper, we first study the question of whether there are tight two-party Bell inequalities with no quantum violation. In addition to the binary outcome correlation Bell inequalities explored in \cite{AII06}, we also study their natural generalization to more outcomes known as linear games. We develop a simple sufficient condition for these games to 
exhibit no quantum violation based on the singular vectors of their game matrices \cite{RAM}, with which we identify new inequalities with this property.  A well-known class of two-party inequalities that do not allow for quantum violation are those arising from another information-theoretic principle, called no-NLC (no advantage in non-local computation) \cite{NLC}. 
Its generalization to the case of more outcomes was considered in \cite{RAM}. We show by an explicit decomposition into other valid inequalities 
of the classical and the correlation polytopes that a broad set of these inequalities do not constitute facets of both these polytopes. 
{Finally, we
also study the question {of tight inequalities} in the single-party scenario, namely non-contextuality inequalities with binary outcomes. 
{On the one hand, we identify the non-contextual polytope  that arises from the compatibility graph representing the measurements in the
experiment with a well explored object in  computer science named \textit{cut polytope} \cite{DL97}. We use this relation to construct
many tight non-contextuality inequalities with no quantum violation.} On the other hand, we further characterise the polytope of behaviours 
that satisfy CE in this example, and show that not all tight non-contextuality inequalities with no quantum violation are among its facets.}

\section{Preliminaries}
\label{Preliminaries}

In this section, we establish the preliminary notions concerning the classical and no-signaling polytopes, and the notion of tight Bell inequalities. 
{Then we present the basics of contextuality scenarios and their sets of behaviours.}

\paragraph{Polytopes.} 
Consider
a linear space $\mathbb{R}^n$ whose elements we denote by $\mathbf{p}$.
A convex polytope $P \subseteq \mathbb{R}^n$ (see \cite{Schrijver,comp-geom}) is the convex hull of a finite number of points in $\mathbb{R}^n$.
Alternatively, $P$ is a bounded polyhedron, where $Q\subset \mathbb{R}^n$ is a polyhedron if there is a 
system of finitely many inequalities $\textbf{C} \cdot \textbf{q} \leq \textbf{b}$ such that 
$Q = \{ \textbf{q} | \textbf{C} \cdot \textbf{q} \leq \textbf{b}\}$. Here, $\mathbf{C}$ is a $m\times n$ 
matrix and $\mathbf{b}\in\mathbb{R}^m$ with $m$ being the number of inequalities.
   
Let $P \subseteq \mathbb{R}^{n}$, $\textbf{c} \in \mathbb{R}^n$ and $b \in \mathbb{R}$. An inequality $\textbf{c} \cdot \textbf{p} \leq b$ is valid for $P$ if it holds for all $\textbf{p} \in P$.  The hyperplane $H_{(\textbf{c}, b)}$ given as
\begin{equation}
H_{(\textbf{c}, b)} = \{ \textbf{p} \in \mathbb{R}^n | \textbf{c} \cdot \textbf{p} = b\}
\end{equation} 
is said to be a supporting hyperplane of the polytope $P$ if $\max\{ \textbf{c} \cdot \textbf{p} | \textbf{p} \in P\} = b$. Then, 
$F$ is a face of $P$ if $F = P$ or $F = P \cap H$ for some supporting hyperplane $H$ of $P$. A set $Q = \{\textbf{p}_1, \dots, \textbf{p}_k\} \subseteq \mathbb{R}^n$ is affinely dependent iff there exist $\lambda_1, \dots, \lambda_k \in \mathbb{R}$ not all zero such that $\sum_{i=1}^{k} \lambda_i \textbf{p}_i = \textbf{0}$ and $\sum_{i=1}^{k} \lambda_i = 1$. The dimension of $P$ is the smallest dimension of its affine hull, i.e., $dim(P) := \max\{|Q| | Q \subseteq P, \text{Q affinely independent}\}$. $F$ is a facet of $P$ if $F$ is a face and $dim(F) = dim(P)  - 1$. An inequality $\textbf{c} \cdot \textbf{p} \leq b$ is said to be facet-defining (or essential) for $P$ if $H_{(\textbf{c}, b)}\cap P $ is a facet of $P$. Let $\textbf{C} \cdot \textbf{p} \leq \textbf{b}$ be a system of valid inequalities for polytope $P$ such that for each facet $F$ of $P$, there is a row $\textbf{c}_i \cdot \textbf{p} \leq \textbf{b}_i$ of $\textbf{C} \cdot \textbf{p} \leq \textbf{b}$ such that $F = P \cap H_{\textbf{c}_i, \textbf{b}_i}$. Then $P = \{ \textbf{p} \in \mathbb{R}^n | \textbf{C} \cdot \textbf{p} \leq \textbf{b}\}$.

\paragraph{Bell scenario and sets of correlations.} 
As explained in the introduction, we consider correlations between two parties Alice and Bob, who can perform $m_A, m_B$ measurements, and obtain $d_A, d_B$ outcomes respectively. Such a scenario is usually referred to as $(2,m_A,m_B,d_A,d_B)$. The correlations the parties 
generate in this way is described by the collection 
$\mathcal{P} = \{P(a,b|x,y)\}$ of conditional probabilities $P(a,b|x,y)$ that 
Alice and Bob obtained $a$ and $b$ upon performing the $x$-th and
$y$-th measurement, respectively with $a (b)=0,\ldots,d_A-1 (d_B - 1)$ and $x (y)=1,\ldots,m_A (m_B)$.
The usual way of dealing with such a collection is to treat it as a vector  $\textbf{P}$ from $\mathbb{R}^{m_A m_B d_A d_B}$.
Now, in addition to the normalization conditions $\sum_{a,b} P(a,b|x,y) = 1$ for all $x,y$, the no-signaling conditions
\begin{equation}\label{NS1}
\sum_{a}P(a,b|x,y)=\sum_{a}P(a,b|x',y)
\end{equation}
for any $b$, $y$ and any pair $x\neq x'$, and
\begin{equation}\label{NS2}
\sum_{b}P(a,b|x,y)=\sum_{b}P(a,b|x,y')
\end{equation}
for any $a$, $x$, and any pair $y\neq y'$, are imposed.  The set of boxes $\mathcal{P}$ {satisfying normalization and no-signaling} thus forms a convex polytope known as the no-signaling polytope $\mathcal{N}$ of dimension $D = m_A m_B (d_A - 1) (d_B - 1) + m_A (d_A -1) + m_B (d_B - 1)$.

In quantum theory Alice and Bob share some quantum state $\rho$ acting on some  Hilbert space $\mathcal{H}_A\ot\mathcal{H}_B$, and perform POVM measurements on their share of this state. The set $\mathcal{Q}$ of quantum boxes so obtained forms a convex set if one does not constrain the dimension of $\rho$. 
In general, however, 
the boundary of $\mathcal{Q}$ remains unknown and is very difficult to determine
(see nevertheless, \cite{Masanes} for results in the two-input two-output scenario). Interestingly, some of the elements of $\mathcal{Q}$ 
can be generated by the parties without using any quantum state; in fact, the only 
resource needed to obtain them is some shared classical information,
also referred to as \textit{shared randomness},
represented by a random variable $\lambda$ with probability distribution $p(\lambda)$.
Such correlations are those for which 
\begin{equation}\label{local}
P(a,b|x,y)=\sum_{\lambda}p(\lambda)P_A(a|x,\lambda)P_B(b|y,\lambda).
\end{equation}
and are said to admit a \textit{local hidden variable} model. For simplicity we also call them \textit{local} or \textit{classical} {correlations}.
Here $P_A(a|x,\lambda)$ and $P_B(b|y,\lambda)$ are local probability distributions.

Local correlations again form a convex set, this time, however, 
it is a polytope $\mathcal{L}$, whose vertices 
are local deterministic correlations $\{P_{\mathrm{det}}(a,b|x,y)\}$ 
for which $P_{\mathrm{det}}(a,b|x,y)=P_A(a|x)P_B(b|y)$ 
and all local probabilities $P_A(a|x), P_B(b|y)\in\{0,1\}$ with $x (y)=1,\ldots,m_A (m_B)$
and $a (b)=1,\ldots,d_A (d_B)$. 
Since the work of Bell \cite{Bell}, we know that $\mathcal{L} \subsetneq \mathcal{Q}$, i.e., 
there exist quantum correlations $\mathbf{p}\in \mathcal{Q}$, termed non-local, 
which cannot be written as in Eq. (\ref{local}). 
And subsequently, from the work  of Popescu and Rohrlich \cite{PR} we have that the chain of inclusions
$\mathcal{L} \subsetneq \mathcal{Q}\subsetneq \mathcal{N}$ holds true. That is, 
local correlations form a proper subset of the set of quantum correlations, while the latter forms 
a proper subset of all nonsignaling correlations. There exist quantum correlations
that are not local, usually termed non-local, as well as nonsignaling correlations which are not quantum.

The Bell inequalities are linear inequalities 
\begin{equation}\label{dupa}
\mathbf{c}\cdot\mathbf{P}\leq b
\end{equation}
that constrain the local set $\mathcal{L}$, and their violation 
signals non-locality. Here, $\mathbf{c}\in\mathbb{R}^N$ is some constant vector and 
$b$ stands for the so-called classical bound defined as
$b=\max_{\mathbf{p}\in P_{\mathrm{L}}}\mathbf{c}\cdot\mathbf{P}$. 
As explained above, owing to the fact that $\mathcal{L}$ is a polytope, a finite amount of
such Bell inequalities is sufficient to fully characterize it. These correspond
to the facets of $\mathcal{L}$ and are usually called
\textit{tight Bell inequalities} for the local Bell polytope. 
To check then whether a given Bell inequality (\ref{dupa})
defines a facet of the corresponding local polytope of dimension $D$
one needs to show that the classical deterministic boxes 
$\{P_{\mathrm{det}}(a,b |x,y)\}$ that achieve value $b$ 
for the inequality span an affine subspace of dimension $D-1$.

In the study of tight two-party Bell inequalities, another local polytope has also been of interest namely the correlation polytope in the binary 
outcome scenario $(2,m_A,m_B,2,2)$. This is the polytope of possible two-party correlations $\langle A_x B_y \rangle$ achievable in a local realistic 
theory where $A_x, B_y$ denote dichotomic {events and $\langle A_x B_y \rangle=P(a=b|x,y)-P(a\neq b|x,y)$}. While the local Bell polytope in this scenario 
is of dimension $m_A m_B + m_A + m_B$, the correlation polytope is of smaller dimension $m_A m_B$ by virtue of not considering the local
terms $\langle A_x \rangle$ and $\langle B_y \rangle$. In this binary output situation, there is a rich literature relating the local Bell polytope, 
the correlation polytope and the set of quantum correlations to convex sets studied in polyhedral combinatorics \cite{AII06, AHW10}.

\paragraph{{Contextuality scenarios and sets of behaviours.}} 

	Here we will focus on contextuality scenarios arising from compatibility graph scenarios, like the one discussed by Klyachko \cite{klyachko1,klyachko2} and equivalent to the marginal problem \cite{planeta,fritz}.
	
	{There are two ways to approach contextuality scenarios, both starting from a hypergraph but giving the vertices and edges a different interpretation. One starts from a compatibility hypergraph, where its vertices represent measurements and the hyperedges the sets of compatible measurements.  The other starts from an events  hypergraph, where vertices represent measurement outcomes and the hyperedges the measurements   \cite{AFLS}. Both include Bell scenarios as a particular family of general contextuality ones. Here we will briefly review the concepts relevant to this work.}

{Within the hypergraph approach of \cite{AFLS}, a contextuality scenario is defined as a hypergraph $H=(V,E)$ whose vertices $v \in V$ correspond to the events of the scenario, and where the hyperedges $e = \{v_1 , \cdots , v_k \} \in E$ (subsets of $V$) are the measurements of the scenario, with the vertices as the allowed measurement outcomes. In addition, every measurement is assumed to be complete, in the sense that every behaviour over the contextuality scenario $H$ satisfies the normalization condition $\sum_{v \in e} P(v) = 1$ for every $e$, where $P(v)$ denotes the probability that outcome $v$ is obtained given that the measurement $e$ is performed. As classical behaviours we consider the ones that are explained via deterministic non-contextual (NC) hidden variable models and convex combinations of them; i.e. a behavior is classical if it can be explained by a convex sum of the ones where only one outcome happens with certainty for each measurement. The set of all classical behaviours forms the non-contextual polytope. Its facets are referred to as tight noncontextuality inequalities.}

{On a similar footing, a behaviour is quantum whenever the probabilities arise as $P(v) =\mathrm{tr}(\rho P_v)$, where $\rho$ and $\{P_v\}_{v \in V}$ are hermitian operators over some Hilbert space $\mathcal{H}$, and the projectors $P_v$ satisfy the following conditions: $\sum_{v \in e} P_v = \openone_{\mathcal{H}}$ for every measurement $e$ and $P_v \perp P_u$ when $v$ and $u$ belong to the same measurement. }

{Within this framework, a natural notion of exclusiveness among events arises: two events which are outcomes of the same measurement are naturally exclusive. Then, in the language of graph theory, two distinct vertices $u$ and $v$ are orthogonal (denoted by $u \perp v$) if there exists a hyperedge $e \in E$ such that $u \in e$ and $v \in e$. A proposed principle to bound the set of quantum models for contextuality scenarios is {\it Consistent Exclusivity} (CE) \cite{Henson, AFLS}, also refered to as {\it Global Exclusivity} \cite{cabello}. Here, we focus only on the constraints that it imposes over one copy of the system, which are expresed as follows: 
\begin{definition}\label{defCE}
A probabilistic model $P$ on a contextuality scenario $H$ satisfies CE$^1$ when $\sum_{v \in S} P(v) \leq 1$ for every set $S \subset V$ of mutually orthogonal events. 
\end{definition} 
We refer to the conditions in Def. \ref{defCE} as CE$^1$ inequalities, and usually denote each them by the set of orthogonal events $S$ that gives rise to it. It can be easily verified that, just like the Local Ortogonality inequalities \cite{LO}, the CE$^1$ inequalities also cannot be violated in quantum theory \cite{AFLS}.

{The other successful approach to contextuality \cite{AB} focuses instead on the set of measurements that are performed on the (possibly multipartite) system and the compatibility relations among them. Within this language, a compatibility hypergraph  is one where its vertices represent measurable quantities (namely, measurements) and hyperedges represent quantities that can be jointly measured i.e., compatible measurements. In addition, the hypergraph can be equipped with vertex weights denoting the number of possible outcomes of the corresponding measurement. Here we will focus on dichotomic observables, and hence simplify the scenario to unweighted hypergraphs.}

{Within this viewpoint, an assignment of probabilities to measurement outcomes (denoted as well by behaviour) is well defined as long as it satisfies the `no disturbance' condition (also referred to as No-Signaling in Bell scenarios, or sheaf-condition in \cite{AB}); that is, the (marginal) probability for obtaining an outcome when performing a measurement should not depend on the choice of other compatible measurements that are performed alongside.}

{Given a contextuality scenario with compatibility graph $G(V,E)$, a noncontextual deterministic behaviour is an assignment of an outcome to each measurement $M_i$ which does not depend on the other measurements that are compatible with it. This assignment then tells what are the outcomes that occur with certainty whenever compatible measurements are performed. When the scenario consists of dichotomic measurements, such deterministic assignments are equivalently determined by the assignment of a value $\pm 1$ to the {\it single correlators} $\mean{M_i} = P(1\vert\mi) - P(-1 \vert\mi)$. Moreover, in the case where the sets of compatible measurements have two elements (i.e. the compatibility structure is given by  a graph), the object of interest is the conditional probability distribution (a.k.a behaviour) $p(ab|ij)$, where $M_i$ and $M_j$ are compatible. A behaviour that satisfies the no-disturbance principle is equivalently represented by the single correlators $\mean{M_i}$ and its \textit{full correlators} $\mean{M_i M_j} := P(a = b \vert\mi , \mj) - P(a\neq b \vert\mi , \mj)$ \cite{GisinCG, AQB+13}.  On the one hand, the non-contextual (NC) polytope of behaviours (for a compatibility graph $G(V,E)$) is defined as the convex hull of deterministic noncontextual ones in $G(V,E)$. On the other hand, another object of interest is the \textit{full-correlation} non-contextual polytope (FC-NC). This is indeed the projection of the NC polytope onto the subspace defined by the full correlators $\mean{M_i M_j}$. Facets of the full-correlation polytope are also facets of the complete one, and a violation of a full-correlation inequality implies a violation of a standard non-contextual one.}

\quad

The interest in finding facet-defining inequalities with no quantum advantage can simply be stated as an interest in finding the largest dimensional 
face of the set of quantum correlations that one can describe analytically. Such faces contribute to an enhanced understanding of this set that one may 
then use to identify information-theoretic principles underlying quantum correlations {for both Bell and general contextuality scenarios}.

\section{Identifying inequalities with no quantum advantage}
\label{SECIII}
In this section, we study the question of how one may identify inequalities with no quantum advantage. For the binary outcome correlation inequalities, 
the quantum violation of the inequality is known to be calculable by a semidefinite program \cite{Tsirelson,Wehner}\cite{footnote1}.
{In Ref.~\cite{RKM+14} we proposed a necessary and sufficient condition for the lack of quantum advantage for these inequalities (XOR games).}
A simpler sufficient but not necessary condition was also given, namely that the game matrices have maximum singular vectors with $\pm 1$ entries 
only. Here, we extend the condition {of no quantum advantage} to the many-outcome scenario of linear games \cite{RAM, Hastad},
which are a class of the well-known unique games \cite{Khot}. 

\textit{A bound on the quantum value of three outcome unique games.}
Unique games are a generalization of \textsc{xor} games to arbitrary output alphabet and are defined as follows:
\begin{mydef}
A two-player unique game $(\textsl{g}^{u}, q)$ is one where two players Alice and Bob receive questions $x$, $y$ from sets $Q_A$ and $Q_B$ respectively, chosen from a probability distribution $q(x,y)$ by a referee. They reply with respective answers $a, b \in [d]$. The game is defined by a winning constraint $b = \pi_{(x,y)}(a)$ for some set of permutations $\{\pi_{(x,y)} \} \subset \mathcal{S}_d$, where $\mathcal{S}_d$ denotes the permutation group on $d$ elements.  
\end{mydef}
A sub-class of the unique games are the \textsc{linear} games, where the output alphabet $[d]$ is identified with an Abelian group $G$ of size $d$, i.e., $a, b \in (G, +)$; and the winning constraint is given by $a + b = f(x,y)$ where $+$ denotes the addition operation in the Abelian group $G$ and $f : Q_A \times Q_B \rightarrow G$. The special case of the cyclic groups $\mathbb{Z}_d$ (integers under addition modulo $d$) is called an \textsc{xor}-d game \cite{RAM}, and the further restriction to $\mathbb{Z}_2$ defines the \textsc{xor} game \cite{Tsirelson}. The class of \textsc{xor} games forms a highly interesting class of Bell inequalities that are also called ``correlation Bell inequalities" where the correlation function $\mathit{E}_{x,y} = \sum_{k=0,1} (-1)^k P(a \oplus b \; \text{mod 2} = k|x,y)$.

The value of the unique game is given by the expression 
\begin{equation}
\omega(\textsl{g}^{u}) = \sum_{\substack{x \in Q_A \\ y \in Q_B}} \sum_{a,b \in G} q(x,y) V(a,b|x,y) P(a,b|x,y),
\end{equation}
where $V(a,b|x,y) = 1$ if $b =\pi_{(x,y)}(a)$ and $0$ otherwise. 
The maximum classical value of the game (the maximum over all deterministic assignments of $a, b$ or their convex combinations) is denoted $\omega_c(\textsl{g}^{u})$, the value of the game achieved by a quantum strategy (POVM measurements on a shared entangled state of arbitrary Hilbert space dimension) is denoted $\omega_q(\textsl{g}^{u})$, while the value achieved by no-signaling strategies (where neither party can signal their choice of input using the correlations) is denoted $\omega_{ns}(\textsl{g}^{u})$. These games have been studied \cite{Hastad, Khot} in the context of hardness of approximation of several important optimization problems, in attempts to identify the existence of polynomial time algorithms to approximate the optimum solution of the problem to within a constant factor. For every unique game, $\omega_{ns}(\textsl{g}^{u}) = 1$ since a no-signaling box exists that wins the game. Such a box is defined by the entries $P(a,b|x,y) = 1/d$ if $b = \pi_{(x,y)}(a)$ and $0$ otherwise for all input pairs $(x,y)$, this strategy clearly wins the game, and is no-signaling since the output distribution seen by each party is fully random for every input, i.e., $P(a|x) = P(b|y) = 1/d$.

\subsection{Bounds on quantum value for unique games}

Let us first bound the quantum value of a unique game with three outputs using generalized norms from a set of matrices, bounds for more outcome unique games and their applications will be discussed elsewhere. 
\begin{prop}\label{thm2}
\label{norm-bound}
The quantum value of a three output unique game $\textsl{g}^u$ with input sets $ Q_A, Q_B$  can be bounded as 
\begin{equation}
\omega_q(\textsl{g}^{u}) \leq \frac{1}{3} \left[ 1 + \sqrt{m_Am_B} \sum_{k=1,2} \| \Phi^{H}_k, \Phi^{(01)H}_k \|_{\text{gen}} \right].
\end{equation}
where 
\begin{equation}\label{gamematrix1}
\Phi^{H}_{k} = \sum_{(x,y) \in Q_A \times Q_B} q^H(x,y) \zeta^{-k f(x,y)} | x \rangle \langle y|
\end{equation}
and
\begin{equation}\label{gamematrix2}
\Phi^{(01) H}_{k} = \sum_{(x,y) \in Q_A \times Q_B} q^{(01) H}(x,y) \zeta^{k f(x,y)} |x \rangle \langle y|
\end{equation}
are the game matrices, with $\zeta= \exp(2 \pi i/3)$, $q^H(x,y)$ and $q^{(01) H}(x,y)$ are probability distributions as specified in the proof bellow, and 
$\| \Phi^{H}_k, \Phi^{(01)H}_k \|_{\text{gen}}$
is defined as
\begin{eqnarray}
\label{eq:gen-norm}
&&\| \Phi^{H}_k, \Phi^{(01)H}_k \|_{\text{gen}} := \nonumber \\
&& \; \; \max \{ \| \Phi^{H}_k x_1 +  \Phi^{(01)H}_k x_2 \| : \| x_1\| = 1, \|x_2 \|=1 \}.\nonumber\\
\end{eqnarray}
\end{prop}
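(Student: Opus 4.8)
The plan is to mimic the Tsirelson-type analysis of \textsc{xor} games, now adapted to three outcomes by Fourier-analysing the winning predicate over $\mathbb{Z}_3$ and exploiting the group structure $\mathcal{S}_3 \cong \mathbb{Z}_3 \rtimes \mathbb{Z}_2$. First I would write $P(a,b|x,y) = \langle\psi|E^x_a \otimes F^y_b|\psi\rangle$ for a pure state $|\psi\rangle$ (legitimate by convexity) with projective measurements (legitimate by Naimark dilation, so that the Fourier-transformed observables below are unitary). Writing the winning indicator as $V(a,b|x,y) = [\,b = \pi_{(x,y)}(a)\,] = \frac{1}{3}\sum_{k=0}^{2}\zeta^{k(b - \pi_{(x,y)}(a))}$ with $\zeta = e^{2\pi i/3}$, the game value splits into three Fourier modes. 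The $k=0$ mode collapses to $\frac{1}{3}\sum_{x,y}q(x,y) = \frac13$ because $\sum_a E^x_a = \sum_b F^y_b = \openone$; this produces the leading $1$ inside the bracket. It remains to bound the $k=1,2$ modes $T_k$, which are complex conjugates of one another, by $\sqrt{m_Am_B}\,\|\Phi^H_k,\Phi^{(01)H}_k\|_{\text{gen}}$ each, so that $\omega_q = \frac13 + \frac13(T_1+T_2) \le \frac13[\,1 + \sqrt{m_Am_B}\sum_{k=1,2}\|\Phi^H_k,\Phi^{(01)H}_k\|_{\text{gen}}\,]$.

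Next I would use the factorisation of every $\pi_{(x,y)}\in\mathcal S_3$ as $\pi_{(x,y)} = r^{f(x,y)} s^{\epsilon(x,y)}$, where $r:c\mapsto c+1$ generates the cyclic subgroup $\mathbb Z_3$, $s:c\mapsto 1-c$ is the transposition $(01)$, and $\epsilon(x,y)\in\{0,1\}$. This partitions the question pairs into those acting by a pure rotation ($\epsilon = 0$, weight $q^H$) and those involving the reflection ($\epsilon = 1$, weight $q^{(01)H}$); on the latter I relabel Alice's outcome $a \mapsto s(a)$, which turns the reflection constraint into the linear constraint $b = a + f$ at the cost of replacing Alice's Fourier observable by its adjoint and absorbing a unit phase. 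Introducing $A^x_k = \sum_a \zeta^{-ka}E^x_a$ (unitary) and $B^y_k = \sum_b \zeta^{kb}F^y_b$, the two families of terms become exactly the bilinear forms generated by the game matrices $\Phi^H_k$ and $\Phi^{(01)H}_k$ of Eqs.~(\ref{gamematrix1})--(\ref{gamematrix2}), the sign flip $\zeta^{-kf}\to\zeta^{+kf}$ between the two blocks recording the adjoint (equivalently, the passage to the conjugate mode) produced by the relabeling.

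I would then vectorise in Tsirelson's manner: setting $|u_{k,x}\rangle = ((A^x_k)^\dagger\otimes\openone)|\psi\rangle$ and $|u'_{k,x}\rangle = (A^x_k\otimes\openone)|\psi\rangle$ (all unit vectors, from the rotation and reflection parts respectively) and $|v_{k,y}\rangle = (\openone\otimes B^y_k)|\psi\rangle$, the $k$-th mode reads $T_k = \sum_{x,y}(\Phi^H_k)_{xy}\langle u_{k,x}|v_{k,y}\rangle + \sum_{x,y}(\Phi^{(01)H}_k)_{xy}\langle u'_{k,x}|v_{k,y}\rangle$. Expanding the inner products in an orthonormal basis $\{|\mu\rangle\}$ and collecting the Alice/Bob components into vectors $\mathbf u_\mu,\mathbf u'_\mu\in\mathbb C^{m_A}$ and $\mathbf v_\mu\in\mathbb C^{m_B}$, a Cauchy--Schwarz estimate over $\mu$ together with $\sum_\mu\|\mathbf v_\mu\|^2 = \sum_y\|v_{k,y}\|^2 = m_B$ and $\sum_\mu\|\mathbf u_\mu\|^2 = \sum_\mu\|\mathbf u'_\mu\|^2 = m_A$ is meant to feed the two Alice vector families into the two unit-norm slots of $\|(\Phi^H_k)^\dagger,(\Phi^{(01)H}_k)^\dagger\|_{\text{gen}} = \|\Phi^H_k,\Phi^{(01)H}_k\|_{\text{gen}}$, yielding $|T_k|\leq\sqrt{m_Am_B}\,\|\Phi^H_k,\Phi^{(01)H}_k\|_{\text{gen}}$.

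The main obstacle is precisely this last combination: making the \emph{generalized} norm, with the stated $\sqrt{m_Am_B}$ prefactor, emerge rather than a weaker surrogate. A naive Cauchy--Schwarz splits the Alice contribution as $\sqrt{\|\mathbf u_\mu\|^2+\|\mathbf u'_\mu\|^2}$ and produces only the operator norm of the concatenated matrix $[\Phi^H_k\ \Phi^{(01)H}_k]$ together with an extra $\sqrt2$; since $\|\Phi^H_k,\Phi^{(01)H}_k\|_{\text{gen}}\le\sqrt2\,\|[\Phi^H_k\ \Phi^{(01)H}_k]\|$, the target bound is the sharper one, so recovering it requires organising the estimate so that the two (adjoint-related but genuinely independent) unit vector families are inserted into the two norm-one slots directly instead of being merged. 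A secondary, purely bookkeeping, difficulty is tracking the phases and the conjugation induced by the outcome relabeling so that the reflection block carries $\zeta^{+kf}$ and the rotation block $\zeta^{-kf}$ exactly as in Eqs.~(\ref{gamematrix1})--(\ref{gamematrix2}); the analogous computation for \textsc{xor}-$d$ games in \cite{RAM} can be followed as a template.
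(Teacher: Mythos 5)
Your overall architecture --- the $\mathbb{Z}_3$ Fourier analysis, the $k=0$ mode producing the leading $1$, the coset decomposition $\mathcal{S}_3 = H \cup (01)H$ into rotation and reflection question pairs, and the vectorization with block vectors of norm $\sqrt{m_A}$, $\sqrt{m_B}$ --- matches the paper's proof. But the step you yourself single out as the main obstacle is where the argument genuinely fails. By relabeling Alice's outcome on the reflection block you end up with \emph{two} Alice vector families ($|u_{k,x}\rangle$ and $|u'_{k,x}\rangle$) paired against a \emph{single} Bob family, so the Cauchy--Schwarz estimate applied to $T_k = \bigl(\langle U_k|(\Phi^H_k\otimes\mathbf{1}) + \langle U'_k|(\Phi^{(01)H}_k\otimes\mathbf{1})\bigr)|V_k\rangle$ naturally produces $\sqrt{m_Am_B}\,\| (\Phi^H_k)^\dagger,(\Phi^{(01)H}_k)^\dagger \|_{\text{gen}}$, and the identity you invoke, $\|(\Phi^H_k)^\dagger,(\Phi^{(01)H}_k)^\dagger\|_{\text{gen}} = \|\Phi^H_k,\Phi^{(01)H}_k\|_{\text{gen}}$, is false in general: unlike the spectral norm, the generalized norm of Eq.~(\ref{eq:gen-norm}) is not invariant under taking adjoints of both arguments, because its two unit-norm slots sit on only one side of the matrices. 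For instance, $A = |0\rangle\langle 0|$ and $B = |0\rangle\langle 1|$ (which even have disjoint supports, as $\Phi^H_k$ and $\Phi^{(01)H}_k$ do) give $\|A,B\|_{\text{gen}} = 2$ while $\|A^\dagger,B^\dagger\|_{\text{gen}} = \sqrt{2}$. Since the discrepancy can go in either direction, your route proves neither the stated bound nor a stronger surrogate for it.

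The paper sidesteps this by not relabeling Alice at all. It observes instead that the constraints coming from $H$ are of the form $a\ominus b = f^{\ominus}$ and therefore couple Alice's Fourier mode $k$ to Bob's mode $-k$, while those from $(01)H$ are of the form $a\oplus b = f^{\oplus}$ and couple mode $k$ to mode $k$. Hence for each $k$ there is a \emph{single} Alice vector $|\alpha_k\rangle$ and \emph{two distinct Bob} vectors $|\beta_{-k}\rangle$ and $|\beta_k\rangle$, and the $k$-th contribution reads $\langle\alpha_k|\bigl((\mathbf{1}\otimes\Phi^H_k)|\beta_{-k}\rangle + (\mathbf{1}\otimes\Phi^{(01)H}_k)|\beta_k\rangle\bigr)$. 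A single Cauchy--Schwarz then feeds the two normalized Bob vectors directly into the two unit-norm slots of Eq.~(\ref{eq:gen-norm}), with the matrices acting from the left exactly as the norm is defined, and yields the stated $\sqrt{m_Am_B}$ prefactor. (Your phase bookkeeping also drifts: after your relabeling the reflection block carries $\zeta^{+kf}\langle A_x^{-k}\otimes B_y^{-k}\rangle$ rather than $\zeta^{+kf}\langle A_x^{k}\otimes B_y^{k}\rangle$; but this is secondary to the orientation issue above, which is the real gap.)
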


\begin{proof}
Consider a quantum strategy given by projective measurements $\{ \Pi_{x}^{a} \}$ for Alice and $\{ \Sigma_{y}^{b} \}$ for Bob performed on a pure quantum state $ | \Psi \rangle \in \mathbb{C}^{d} \otimes \mathbb{C}^d$ for some arbitrary dimension $d$. Let us introduce the generalized correlators $\langle A_{x}^{k} \otimes B_{y}^{l} \rangle$ defined via the Fourier transform of the probabilities $P(a,b|x,y)$ as
\begin{equation}
\langle A_{x}^{k} \otimes B_{y}^{l} \rangle = \sum_{a, b = 0}^{2} \zeta^{-(a k + b l)} P(a,b|x,y),
\end{equation}  
where $\zeta = \exp(2 \pi i/3)$ and the unitary operators $A_{x}^{k}$ and $B_y^l$ are defined as
\begin{equation}
A_{x}^k = \sum_{a = 0}^{2} \zeta^{-a k} \Pi_{x}^{a} \; \; \text{and} \; \; B_{y}^l = \sum_{b=0}^2 \zeta^{- b l} \Sigma_{y}^b.
\end{equation}

There are six possible permutations that enter the game, namely the elements of group $\mathcal{S}_3 := \{ e, (01), (02), (12), (012), (021)\}$ where the permutations are denoted as usual in the cycle notation, i.e., $(01) = (0 \rightarrow 1, 1 \rightarrow 0, 2 \rightarrow 2)$, etc. We consider the maximal abelian subgroup $H$ of $\mathcal{S}_3$ ($H < \mathcal{S}_3$) given by $H := \{e, (012), (021) \}$. The corresponding left coset obtained by the action of $(01)$ on $H$ is given by $(01) H = \{(01), (12), (02)\}$. In this simple case of three outputs, we immediately see that the  permutation constraints in the game $b = \pi_{x,y}(a)$ for $\pi_{x,y} \in H$ are simply constraints of the form $a \ominus b \; \text{mod 3} = f^{\ominus}_{\pi_{x,y}}(x,y)$ with the correspondence $(f^{\ominus}_{e}(x,y) = 0, f^{\ominus}_{(012)}(x,y) = 2, f^{\ominus}_{(021)}(x,y) = 1)$. Similarly, permutation constraints $b = \pi_{x,y}(a)$ for $\pi_{x,y} \in (01) H$ correspond to constraints of the form $a \oplus b \; \text{mod 3} = f^{\oplus}_{\pi_{x,y}}(x,y)$. We have that 
\begin{eqnarray}
&&P(b = \pi_{x,y}(a) | x,y) = P(a \ominus b = f^{\ominus}_{\pi_{x,y}} | x, y) \nonumber \\
&& \; \; = \frac{1}{9} \sum_{k, l=0}^{2} \sum_{a=0}^{2} \zeta^{a (k+l) + f^{\ominus}_{\pi_{x,y}}(x,y) l} \langle A_x^k \otimes B_y^l \rangle  \nonumber \\
&& \; \; = \frac{1}{3} \sum_{k=0}^{2} \zeta^{- k f^{\ominus}_{\pi_{x,y}}} \langle A_x^k \otimes B_y^{-k} \rangle, \; \; \forall \pi_{x,y} \in H,
\end{eqnarray}  
and similarly, 
\begin{eqnarray}
&&P(b = \pi_{x,y}(a) | x,y) = P(a \oplus b = f^{\oplus}_{\pi_{x,y}} | x, y) \nonumber \\
&& \; \; = \frac{1}{9} \sum_{k, l=0}^{2} \sum_{a=0}^{2} \zeta^{a (k-l) + f^{\oplus}_{\pi_{x,y}}(x,y) l} \langle A_x^k \otimes B_y^l \rangle \nonumber \\
&&\; \; = \frac{1}{3} \sum_{k=0}^{2} \zeta^{k f^{\oplus}_{\pi_{x,y}}} \langle A_x^k \otimes B_y^k \rangle, \; \; \forall \pi_{x,y} \in (01) H. 
\end{eqnarray}  
Now, let us define vectors $|\alpha_{k} \rangle, |\beta_{k} \rangle$ as
\begin{eqnarray}
&&|\alpha_{k} \rangle := \sum_{x \in Q_A} \left((A_{x}^{k})^{\dagger} \otimes \identity \right) |\Psi \rangle \otimes |x \rangle, \; \; \nonumber \\
&&|\beta_{l} \rangle := \sum_{y \in Q_B} \left(\identity \otimes B_{y}^{k} \right) |\Psi \rangle \otimes |y \rangle,
\end{eqnarray}
and let us consider the "game matrices" defined in Eqs. (\ref{gamematrix1})  (\ref{gamematrix2}), where
$q^H(x,y) = q(x,y)$ for input pairs such that the winning constraint for this input pair comes from $H$, $\pi_{x,y} \in H$ and $q^H(x,y) = 0$ for the remaining input pairs. The distribution $q^{(01) H}(x,y)$ is defined analogously (as equal to $q(x,y)$ for pairs where the winning constraint comes from $(01) H$ and $0$ otherwise). Consequently, note that the matrices $\Phi^{H}_{k}$ and $\Phi^{(01) H}_{k}$ are such that $(\Phi^{H}_{k})_{i,j} \neq 0$ only if  $(\Phi^{(01) H}_{k})_{i,j} = 0$ and vice versa. 

Now, we observe that the quantum value of the unique game can be written as
\begin{widetext}
\begin{eqnarray}
\omega_q(\textsl{g}^{u}) &=& \frac{1}{3} \sum_{k=0}^{2} \left[ \sum_{(x,y) \in Q_A \times Q_B} q^H(x,y) \zeta^{-k f^{\ominus}_{\pi_{x,y}}} \langle A_x^k \otimes B_y^{-k} \rangle + q^{(01) H}(x,y) \zeta^{k f^{\oplus}_{\pi_{x,y}}(x,y)} \langle A_x^k \otimes B_y^k \rangle \right] \nonumber \\
& = & \frac{1}{3} \left[ 1 + \langle \alpha_1 | \mathbf{1} \otimes \Phi^{H}_1 | \beta_2 \rangle +  \langle \alpha_1 | \mathbf{1} \otimes \Phi^{(01) H}_1 | \beta_1 \rangle +\langle \alpha_2 | \mathbf{1} \otimes \Phi^{H}_2 | \beta_1 \rangle + \langle \alpha_2 | \mathbf{1} \otimes \Phi^{(01)H}_2 | \beta_2 \rangle \right] \nonumber \\
&\leq & \frac{1}{3} \left[1 +  \sqrt{m_Am_B}\| \Phi^{H}_1, \Phi^{(01)H}_1 \|_{\text{gen}} +  \sqrt{m_Am_B} \| \Phi^{H}_2, \Phi^{(01)H}_2 \|_{\text{gen}} \right].
\end{eqnarray} 
\end{widetext}
Here, $\| \Phi^{H}_k, \Phi^{(01)H}_k \|_{\text{gen}}$ is defined as in Eq. (\ref{eq:gen-norm}).
Compare with the usual spectral norm of a matrix $\| A \| = \max \{ \|A x \| : \| x \| = 1 \}$. 
\end{proof}

For more outcome unique games, the above bound can be generalized by identifying maximal abelian subgroups of the group of permutations appearing in the game, this approach will be pursued elsewhere. On the other hand,
when restricted to the scenario of linear games with $m_A, m_B$ inputs, the above method of proof recovers the bound that we derived in \cite{RAM}, namely
\begin{equation}
\label{bound}
\omega_q(\textsl{g}^{l}) \leq \frac{1}{d} \left[ 1 + \sqrt{m_A m_B} \sum_{k=1}^{d-1} \| \Phi_k \| \right]
\end{equation} 
with game matrices given by 
\begin{equation}\label{Ziobro}
\Phi_k = \sum_{x=1}^{m_A} \sum_{y=1}^{m_B} q(x,y) \zeta^{k f(x,y)} | x \rangle \!\langle y |.
\end{equation}

The advantage of formulating bounds on the quantum value in terms of game matrices is that it allows for an easy sufficient condition to recognize and construct games with no quantum advantage.  
\begin{prop}
\label{prop:qeqc}
Let $G$ be a linear game with $d$ outputs and let $\Phi_1, \dots, \Phi_{d-1}$ denotes its corresponding game matrices. If the maximum left and right singular vectors $|u_1 \rangle, |v_1 \rangle$ of $\Phi_1$ are composed of roots of unity entries, and if in addition the maximum singular vectors of $\Phi_k$ are obtained from $|u_1 \rangle, |v_1 \rangle$ by the substitution $\zeta \rightarrow \zeta^k$, then $\omega_q(G) = \omega_c(G)$.  
\end{prop}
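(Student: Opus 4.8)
The plan is to sandwich the quantum value between the classical value and the norm bound of Eq.~(\ref{bound}), and then to exhibit an explicit deterministic classical strategy whose value equals that bound. Since $\omega_c(G)\le\omega_q(G)$ always holds (a quantum strategy can reproduce any classical one) and Eq.~(\ref{bound}) gives $\omega_q(G)\le\frac1d[1+\sqrt{m_A m_B}\sum_{k=1}^{d-1}\|\Phi_k\|]$, it suffices to build a classical strategy attaining the right-hand side; this collapses the whole chain into equalities.

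First I would translate the singular-vector hypothesis into an exact formula for the norm terms. Writing the entries of the maximal right and left singular vectors of $\Phi_1$ as powers of $\zeta=\exp(2\pi i/d)$, say $|u_1\rangle=\sum_x \zeta^{r(x)}|x\rangle$ and $|v_1\rangle=\sum_y \zeta^{s(y)}|y\rangle$ with $r,s$ valued in $\mathbb{Z}_d$, the substitution hypothesis yields $|u_k\rangle=\sum_x \zeta^{k r(x)}|x\rangle$ and $|v_k\rangle=\sum_y \zeta^{k s(y)}|y\rangle$. Each of these vectors has all entries of modulus one, so its Euclidean norm is $\sqrt{m_A}$ (resp.\ $\sqrt{m_B}$) for every $k$. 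Normalising and using that $|u_k\rangle,|v_k\rangle$ are matched to the \emph{largest} singular value, I obtain the key identity
\[ \sqrt{m_A m_B}\,\|\Phi_k\| = \langle u_k|\Phi_k|v_k\rangle = \sum_{x,y} q(x,y)\,\zeta^{k(f(x,y)-r(x)+s(y))}, \]
a quantity which is moreover real and non-negative, being $\sqrt{m_A m_B}$ times the top singular value.

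Second I would evaluate an explicit classical strategy: let Alice answer $a(x)=r(x)$ and Bob answer $b(y)=-s(y)$ in $\mathbb{Z}_d$. This deterministic box wins exactly when $r(x)-s(y)=f(x,y)$, so expanding the winning indicator via the Fourier identity $\frac1d\sum_{k=0}^{d-1}\zeta^{km}=[m\equiv 0]$, its value is
\[ \frac1d\sum_{k=0}^{d-1}\sum_{x,y} q(x,y)\,\zeta^{-k(f(x,y)-r(x)+s(y))} = \frac1d\Big[1+\sum_{k=1}^{d-1}\overline{\langle u_k|\Phi_k|v_k\rangle}\Big]. \]
Because each $\langle u_k|\Phi_k|v_k\rangle$ was shown to be real and non-negative, the conjugate drops out and this value equals $\frac1d[1+\sqrt{m_A m_B}\sum_{k=1}^{d-1}\|\Phi_k\|]$, precisely the right-hand side of Eq.~(\ref{bound}). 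Assembling the chain, the value of this particular strategy is at most $\omega_c(G)\le\omega_q(G)$, which in turn is bounded by Eq.~(\ref{bound}), which we have just shown equals the strategy's value; hence all inequalities are tight and $\omega_q(G)=\omega_c(G)$.

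I expect the main obstacle to be the reality/phase bookkeeping in the first step. One must verify that the roots-of-unity vectors are genuinely the matched pair belonging to the top singular value with the correct \emph{relative} phase, so that $\langle u_k|\Phi_k|v_k\rangle$ comes out real and positive (equal to $\sqrt{m_A m_B}\|\Phi_k\|$, not merely bounded by it). This exact positivity is exactly what allows the conjugated Fourier expansion of the classical winning probability to reproduce the norm-bound term by term; without it one would only get an inequality in the wrong direction and the squeeze would fail.
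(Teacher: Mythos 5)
Your proposal is correct and follows essentially the same route as the paper: bound $\omega_q(G)$ by the norm bound of Eq.~(\ref{bound}) and show that the roots-of-unity singular vectors (with the $\zeta\to\zeta^k$ consistency across the $\Phi_k$) define a deterministic strategy $a(x)=r(x)$, $b(y)=-s(y)$ that saturates it. The only difference is one of detail: the paper compresses the final step into ``it is then readily seen,'' whereas you make explicit the Fourier expansion of the winning indicator and the fact that $\langle u_k|\Phi_k|v_k\rangle=\sqrt{m_Am_B}\,\|\Phi_k\|$ is real and non-negative because the vectors form a matched top singular pair — which is exactly the bookkeeping needed to make that step rigorous.
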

\begin{proof}
As in the proof of the norm-based bounds on the quantum value, we use the generalized correlators $\langle A_x^k \otimes B_y^l \rangle$ to express the probabilities $P(a,b|x,y)$ as
\begin{equation}
P(a,b|x,y) = \frac{1}{d^2} \sum_{k,l=0}^{d-1} \zeta^{a k + b l} \langle A_x^k \otimes B_y^l \rangle,
\end{equation} 
so that the quantum success probability in the linear game can be expressed as
\begin{equation}
\omega_q^{(l)}(G) = \max_{\{\Pi_{x}^a \}, \{\Sigma_{y}^b \}}  \frac{1}{d} \sum_{k=0}^{d-1} \left[ \sum_{x=1}^{m_A} \sum_{y=1}^{m_B} q(x,y) \zeta^{k f(x,y)} \langle A_x^k \otimes B_y^k \rangle \right],
\end{equation}
with $A_x^k$ and $B_y^l$ given in general by 
\begin{equation}
A_x^k = \sum_{a=0}^{d-1} \zeta^{-a k} \Pi_{x}^a,\quad
B_y^l = \sum_{b=0}^{d-1} \zeta^{- b l} \Sigma_{y}^b.
\end{equation}
In terms of the matrices $\Phi_k$ defined in Eq. (\ref{Ziobro}) and the vectors $| \alpha_k \rangle = \sum_{x=1}^{m_A} ((A_x^k)^{\dagger} \otimes \textbf{1})| \psi \rangle \otimes |x \rangle$, $| \beta_l \rangle = \sum_{y=1}^{m_B} (\textbf{1} \otimes B_y^l)| \psi \rangle \otimes |y \rangle$, we can equivalently write
\begin{equation}
\omega_q^{(l)}(G) =  \max_{\{| \alpha_k \rangle \}, \{| \beta_k \rangle \}} \frac{1}{d} \left[1 + \sum_{k=1}^{d-1} \langle \alpha_k | \mathbf{1} \otimes \Phi_k | \beta_k \rangle \right].
\end{equation}

The classical success probability $\omega_c^{(l)}(G)$ is achieved by a deterministic strategy where Alice returns a set of deterministic outcomes $\{\hat{a}_x\}$ and similarly Bob returns $\{ \hat{b}_y \}$ upon receiving their respective inputs $x$ and $y$. Noting that $P_c(a \oplus_d b = f(x,y) | x,y) = \frac{1}{d} \sum_{k=0}^{d-1} \zeta^{k(f(x,y) - \hat{a}_x - \hat{b}_y)} = 1$ if $f(x,y) = \hat{a}_x \oplus_d \hat{b}_y$ and $0$ otherwise, 
we see that $\omega_c^{(l)}(G)$ can be written as
\begin{equation}
\omega_c^{(l)}(G) = \max_{\{\hat{a}_x\}, \{\hat{b}_y\}} \frac{1}{d} \sum_{k=0}^{d-1} \left[ \sum_{x=1}^{m_A} \sum_{y=1}^{m_B} q(x,y) \zeta^{k (f(x,y) - \hat{a}_x - \hat{b}_y)} \right].
\end{equation}
It is then readily seen that when the conditions stated in the proposition are met, the left and right singular vectors $|u_1 \rangle$ and $|v_1 \rangle$ of $\Phi_1$ corresponding to the maximum singular value define a consistent classical strategy that achieves the bound on the quantum value.  
\end{proof}
The sufficient condition in Proposition \ref{prop:qeqc} is useful in constructing novel inequalities where quantum theory offers no advantage over classical theories. An example of a linear game with four outputs with no quantum advantage (that does not belong to the $NLC_d$ class explained below) is as follows. 
\[
\Phi_{ex} = (1/56) \begin{bmatrix}
    7       & -3 & 3 i & i \\
    -3      & 7 & i & 3 i \\
    -3 i & - i & 7 & -3  \\
    - i  & -3 i & -3 & 7
\end{bmatrix}\]
This game matrix $\Phi_{ex}$ has maximum eigenvector $[-i,i,-1,1]$, and a classical strategy where Alice and Bob ouput as their four outputs $[1,3,0,2]$ and $[3,1,0,2]$ respectively achieves the quantum success probability in the game.

\subsection{Non-local computation games}

The paradigmatic class of inequalities with no quantum advantage are given by the non-local computation games \cite{NLC}.
These concern the 
distributed computation of a boolean function 
%
$f:\{0,1\}^n\mapsto\{0,1\}$
%
mapping $n$ bits to a single bit. Consider
that two parties, Alice and Bob, receive input strings of length $n$, 
$\mathbf{x}_n=(x_1,\ldots,x_n)$ and $\mathbf{y}_n=(y_1,\ldots,y_n)$, respectively. 
Each bit $x_i$ and $y_i$ is distributed with equal probability, ensuring that 
neither Alice nor Bob is able to correctly learn the bit $z_i=x_i\oplus_2 y_i$, where $\oplus_2$ stands for addition modulo two. 
Now, in order to 
perform distributed computation of the function $f$, Alice 
and Bob must output bits $a$ and $b$ such that 
\begin{equation}
\label{eq:binary-nlc-fn}
a\oplus b=f(x_1\oplus y_1,\ldots,x_n\oplus y_n)\equiv f(\mathbf{z}_n).
\end{equation}
To this end, they can agree on any strategy beforehand, however, after receiving
inputs, no communication can be exchanged between them.
Alice's and Bob's common aim is to maximize, for a 
given input probability distribution 
\begin{equation}
\label{eq:binary-nlc-pr}
p(\mathbf{x}_n,\mathbf{y}_n)=(1/2^n)p(\mathbf{x}_n\oplus_2 \mathbf{y}_n)\equiv(1/2^n)p(\mathbf{z}_n)
\end{equation}
the average probability of success in this task:
\begin{equation}
\omega=\frac{1}{2^n}\sum_{\textbf{x}_n,\textbf{y}_n\in\{0,1\}^n}p(\mathbf{z}_n)p(a\oplus b=f(\mathbf{z}_n)|\mathbf{x}_n,\mathbf{y}_n).
\end{equation}
In what follows we will denote this task by $NLC_2$.

One immediately realizes that the above game, fits very well with
the Bell scenario outlined in Sec. \ref{Preliminaries}. Now, $x,y=1,\ldots,2^n$ (encoded into 
$n$-bit strings $\mathbf{x}_n$ and $\mathbf{y}_n$) are the choices of measurements
that Alice and Bob can perform on some quantum state $\rho$
(which due to the convexity argument can be taken pure), while 
$a$ and $b$ stand for the measurements outcomes. 
It is then natural to ask 
how local $\mathcal{L}$ and quantum correlations $\mathcal{Q}$
perform at $NLC_2$, and, in particular, whether quantum theory provides any advantage in this task. Quite surprisingly, it turns out 
that this is not the case and for no quantum state and measurements 
the success probability $\omega$ can surpass the maximal success probability
over all classical correlations $\omega_{\mathrm{c}}=\max_{\mathbf{p}\in P_{\mathrm{L}}}\omega$ \cite{NLC}. In other words, 
for any such task, $\omega_{\mathrm{q}}=\omega_{\mathrm{c}}$ with
$\omega_{\mathrm{q}}=\max_{\mathbf{p}\in \mathcal{Q}}\omega$.

On the other hand, there exist super-quantum correlations 
obeying the no-signaling principle that can still surpass $\omega_{\mathrm{q}}$. 
So, there might exist a more general theory, respecting the 
no-signaling principle, which at this task can be more powerful than
quantum theory. This observation, being so striking, was then exploited 
to propose a principle to pick out the quantum $\mathcal{Q}$
from $\mathcal{N}$: \textit{quantum correlations are those that perform no better than
classical ones at the $NLC_2$ task}.

In Ref. \cite{NLC} only games with binary outcomes 
were studied and they posed as an open question the consideration of functions $f$ with multi-bit outputs as well as functions with 
higher input and output alphabets. Such generalization was recently proposed 
in Ref. \cite{RAM}. Let us briefly recall it here. To this end, let us 
consider a function $f:\mathbb{Z}_d^n\mapsto\mathbb{Z}_d$ mapping $n$
dits to a single one with $\mathbb{Z}_d=\{0,\ldots,d-1\}$. Imagine then that 
Alice and Bob are given string of $n$ \textit{dits}, $\mathbf{x}_n$ and $\mathbf{y}_n$, 
and their aim is to provide dits $a,b\in\mathbb{Z}_d$ such that $a\oplus_d b=f(\mathbf{z}_n)$, 
where $z_i=x_i\oplus_d y_i$ and $\oplus_d$ denotes addition modulo $d$. We have demonstrated
in Ref. \cite{RAM} that for the class of functions $f$ given by 
\begin{eqnarray}\label{condition}
f(\mathbf{z}_n)&=&g(x_1\oplus_d y_1
,\ldots,x_{n-1}\oplus_d y_{n-1})\cdot (x_n\oplus_d y_n)\nonumber\\
&\equiv & g(\mathbf{x}_{n-1}\oplus_d \mathbf{y}_{n-1})\cdot (x_n\oplus_d y_n)
\end{eqnarray}
where $g:\mathbb{Z}_d^{n-1}\mapsto \mathbb{Z}_d$ is any function, 
and for probability distributions 
\begin{equation}\label{condition2}
p(\mathbf{x}_n,\mathbf{y}_n)
=\frac{1}{d^{n+1}}p(\mathbf{x}_{n-1}\oplus_d\mathbf{y}_{n-1}),
\end{equation}
quantum correlations perform no better than classical one in 
maximizing the average success probability of winning this game
\begin{equation}
\omega=\frac{1}{d^{n}}\sum_{\mathbf{x}_n\mathbf{y}_n\in\mathbb{Z}_d^n}p(\mathbf{z}_n)p(a \oplus_d b=f(\mathbf{z}_n)|\mathbf{x}_n,\mathbf{y}_n).
\end{equation}
Below we denote games defined by Eqs. (\ref{condition})
and (\ref{condition2}) by $NLC_d$.

Interestingly, as we will see below, the restriction to functions given 
by Eq. (\ref{condition}) is vital for the proof that quantum correlations 
do not provide any advantage over the classical ones in $NLC_d$. In fact, otherwise
it is not difficult to find an example of a non-local computation task for 
which $\omega_{\mathrm{q}}> \omega_{\mathrm{cl}}$ if only $d\geq 3$. To this end, 
let us consider the case of $d=3$ and $n=1$ (each Alice and Bob receive a single trit
$x$ and $y$, respectively) and a function $f:\mathbb{Z}_3\mapsto \mathbb{Z}$
given by
\begin{equation}
f(z)=\left\{
\begin{array}{ll}
1,& z=2\\
0,& z=0,1.
\end{array}
\right.
\end{equation}
Assuming then that $p(x,y)=(1/3)p(x\oplus_3 y)=1/9$, it is not difficult to 
realize that the maximal classical value of this game amounts to $\omega_{\mathrm{cl}}=2/3$ 
(just by algorithmically checking all possible deterministic strategies). 
On the other hand, the game matrices 
corresponding to the above game are
\begin{equation}\label{exampleNLC3}
\Phi_1=\frac{1}{9}\left(
\begin{array}{ccc}
1 & 1 & \omega\\
1 & \omega &1 \\ 
\omega &1 &1
\end{array}
\right)
\end{equation}
and $\Phi_2=\Phi_1^*$, where the asterisk stands for the standard complex conjugation.
The operator norms of these matrices are $\|\Phi_1\|=\|\Phi_2\|=\sqrt{3}$, and consequently, our bound 
(\ref{bound}) implies that the maximal quantum value is upper-bounded as 
$\omega_{\mathrm{q}}\leq (1/3)[1+2\sqrt{3}/3]\approx 0.7182$.
Finally, taking the two-qutrit maximally entangled state 
$\ket{\psi_3^+}=(1/\sqrt{3})\sum_{i=0}^2\ket{ii}$ and optimizing 
over one-qutrit measurements on Alice and Bob sites, one finds 
that the maximal value is lower bounded as $\omega_{\mathrm{q}}\geq 0.7124$, 
thus, clearly, $\omega_{\mathrm{q}}>\omega_{\mathrm{cl}}$ for this game. 
In fact, the numerical investigations of \cite{Liang} show that the lower bound in fact gives the quantum value for this game 
\cite{footnote2}. 
This example in the three-outcome scenario at first glance appears rather surprising in view
of the result of Ref. \cite{NLC} stating that all binary non-local computation tasks
exhibit no quantum advantage, however there is a clear reason for this. For $d \geq 3$, one has functions of the form $(x_n \oplus_d y_n)^2$ which are equivalent to the $\text{CHSH}_d$ game $f_{\text{CHSH}_d}(x_n, y_n) = x_n \cdot y_n$ under local relabellings. Quantum theory gives an advantage in the $\text{CHSH}_d$ game so that we see that the restriction of the functions to the condition in Eq.(\ref{condition}) is necessary.

\section{Tightness of $NLC_d$}
In this section, we investigate the tightness of the paradigmatic class of two-party Bell inequalities with no quantum advantage, namely the $NLC_d$ inequalities.  
{In \cite{GYNI} the authors have shown that $NLC_2$ are not facet defining inequalities for the case of two and three input bits. However the proof of non-tightness for any number of input bits was left as an open problem. }

Recall that non-local computation is a game 
in which Alice and Bob receive $n$-dit strings $\mathbf{x}_n$ and $\mathbf{y}_n$
with $x_i,y_i\in\mathbb{Z}_d$ and must output dits $a$ and $b$ such that 
$a\oplus_d b=f(\mathbf{x}_{n}\oplus_d \mathbf{y}_{n})$
for some function $f$ fulfilling the condition (\ref{condition}). We additionally 
assume that the probability distribution $p(\mathbf{x}_n,\mathbf{y}_n)$ is given by Eq. (\ref{condition2}).
These games give rise to Bell inequalities which can explicitly be 
stated as
\begin{equation}\label{inequalities}
\frac{1}{d^n}\sum_{\mathbf{x}_n,\mathbf{y}_n\in\mathbb{Z}_d^n}p(\mathbf{z}_n)
p(a\oplus_d b=f(\mathbf{z}_n)|\mathbf{x}_n,\mathbf{y}_n)\leq \omega_{\mathrm{c}}.
\end{equation}
As already explained, one approach to show that an inequality
does not define a facet of the local polytope is to prove that there are fewer than $D$ affinely 
independent classical deterministic boxes saturating it, i.e. achieving the classical value. Here, we use an alternative approach based on the following straightforward observation, 
namely that the inequality is non-tight if we can exhibit a decomposition of the inequality into other inequalities that define supporting hyperplanes for the polytope. 

\begin{lemma}
\label{facet-lem}
If $P$ is a polytope, then the intersection of two faces of $P$ is a face of $P$. 
A facet of $P$ cannot be obtained as the intersection of two or more different faces of $P$. 
\end{lemma}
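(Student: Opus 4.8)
The plan is to prove the two assertions of Lemma~\ref{facet-lem} separately, relying only on the definition of a face as an intersection $P \cap H$ with a supporting hyperplane (or $P$ itself), as set up in the Preliminaries. For the first assertion, let $F_1 = P \cap H_{(\mathbf{c}_1,b_1)}$ and $F_2 = P \cap H_{(\mathbf{c}_2,b_2)}$ be two faces, where each $H_{(\mathbf{c}_i,b_i)}$ is supporting, i.e., $\max\{\mathbf{c}_i \cdot \mathbf{p} : \mathbf{p}\in P\} = b_i$. First I would exhibit an explicit supporting hyperplane whose intersection with $P$ equals $F_1 \cap F_2$. The natural candidate is $H_{(\mathbf{c}_1 + \mathbf{c}_2,\, b_1+b_2)}$: since $\mathbf{c}_1\cdot\mathbf{p} \le b_1$ and $\mathbf{c}_2\cdot\mathbf{p}\le b_2$ hold for all $\mathbf{p}\in P$, summing gives $(\mathbf{c}_1+\mathbf{c}_2)\cdot\mathbf{p}\le b_1+b_2$, so this new hyperplane is valid and indeed supporting (equality is attained on the common saturating points). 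The key step is then to check the set equality $P \cap H_{(\mathbf{c}_1+\mathbf{c}_2,\,b_1+b_2)} = F_1 \cap F_2$: a point $\mathbf{p}\in P$ saturates the summed inequality if and only if it saturates both individual ones, precisely because neither slack can be negative and their sum vanishes only when each slack is zero. The degenerate cases where $F_1 = P$ or $F_2 = P$ are handled trivially since then the intersection is just the other face.

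For the second assertion, I would argue by a dimension count. Suppose, for contradiction, that a facet $F$ of $P$ can be written as $F = F_1 \cap F_2$ with $F_1, F_2$ two \emph{distinct} faces of $P$, both different from $F$. Since $F$ is a facet, $\dim(F) = \dim(P) - 1$, and as $F \subseteq F_i$ we have $\dim(P)-1 = \dim(F) \le \dim(F_i) \le \dim(P)$ for each $i$. The only way a proper face $F_i \subsetneq P$ can satisfy this is $\dim(F_i) = \dim(P)-1$, i.e., each $F_i$ is itself a facet (a face cannot have dimension equal to $\dim(P)$ unless it equals $P$, which I would exclude by the distinctness hypothesis). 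But a facet $F$ contained in another facet $F_i$ of the same dimension forces $F = F_i$ by maximality, contradicting $F_i \ne F$. I expect the main obstacle to be stating the claim cleanly enough that the logic is airtight: one must be careful about what ``different faces'' means (the intended reading is that neither factor already equals $F$ and neither equals $P$, otherwise the statement is false, since trivially $F = F \cap P$).

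The essential subtlety, and the step I would flag as requiring the most care, is the implicit quantifier in the phrase ``cannot be obtained as the intersection of two or more different faces.'' The substantive content is that a facet is \emph{irredundant}: it cannot be recovered as an intersection of strictly higher-dimensional faces, which is exactly what makes facet-defining inequalities non-decomposable and underlies the decomposition strategy announced before the lemma for proving the $NLC_d$ inequalities non-tight. I would therefore phrase the proof so that it yields this usable consequence: if an inequality $\mathbf{c}\cdot\mathbf{p}\le b$ can be written as $\mathbf{c} = \mathbf{c}_1 + \mathbf{c}_2$ and $b = b_1 + b_2$ with each $\mathbf{c}_i\cdot\mathbf{p}\le b_i$ valid and defining a face strictly larger than the face defined by $\mathbf{c}\cdot\mathbf{p}\le b$, then the latter is not facet-defining. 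Once the two abstract assertions are established, this contrapositive reformulation is immediate and is precisely the tool invoked in the subsequent section.
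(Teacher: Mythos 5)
Your proof is correct and follows essentially the same route as the paper's: the first assertion is proved by exactly the same device (summing the two supporting inequalities and observing that the summed slack vanishes iff both individual slacks do), and your dimension count for the second assertion is just an explicit version of the paper's remark that the affinely independent points spanning $F_1\cap F_2$ also span $F_1$ and $F_2$, forcing them to be facets and hence equal. Your explicit caveat about the degenerate readings (a factor equal to $P$ or to $F$ itself) is a welcome clarification that the paper leaves implicit, but it does not change the substance of the argument.
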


\begin{proof}
Suppose $F$ and $G$ are two faces of $P$, so there are corresponding supporting hyperplanes $H_{(\textbf{c}_{F}, b_{F})}$ and $H_{(\textbf{c}_{G}, b_{G})}$ given as
\begin{eqnarray}
H_{(\textbf{c}_{F}, b_{F})} &:=& \{\textbf{p} | \textbf{c}_{F} \cdot \textbf{p} = b_{F}\} \nonumber \\
H_{(\textbf{c}_{G}, b_{G})} &:=& \{\textbf{p} | \textbf{c}_{G} \cdot \textbf{p} = b_{G} \},
\end{eqnarray}
such that $F = P \cap H_{(\textbf{c}_{F}, b_{F})}$ and $G = P \cap H_{(\textbf{c}_{G}, b_{G})}$. The halfspace 
\begin{equation}
\{ \textbf{p} | (\textbf{c}_{F} + \textbf{c}_{G}) \cdot \textbf{p} \leq b_{F} + b_{G}\}
\end{equation}
contains $P$ and for any $\textbf{p} \in P$, we have that $(\textbf{c}_{F} + \textbf{c}_{G}) \cdot \textbf{p} = b_{F} + b_{G}$ only when both $\textbf{c}_{F} \cdot \textbf{p} = b_{F}$ and $\textbf{c}_{G} \cdot \textbf{p} = b_{G}$.  
Hence the intersection of $F$ and $G$ is the intersection of $P$ with the hyperplane $H_{(\textbf{c}_{F} + \textbf{c}_{G}, b_{F} + b_{G})}$, and so $F \cap G$ is a face of $P$. If $F \cap G$ is a facet of $P$, then by the above argument, we have that $F$ and $G$ must also be facets of $P$. The affinely independent boxes $\textbf{p}$ that define $F \cap G$ also define $F$ and $G$ and so $F = G$. This shows that a facet cannot be obtained as the intersection of two (or more) differing faces. 
\end{proof}

\subsection{$NLC_2$ XOR games} 
Our first aim now is to prove that none of the $NLC$ inequalities is tight in the case $d=2$, that is, 
such Bell inequalities do not give rise to facets of the corresponding local 
polytope. The NLC inequalities are also the interesting case of the XOR games that would serve as paradigmatic candidates 
for facets shared by the elliptope and the the cut polytope for the bipartite graph in the question posed by Avis et al. \cite{AII06} (cf. Section \ref{sec:cut-poly}). We will later see how to generalize this to arbitrary prime $d$ but for a restricted class of functions. 

Note that the $NLC$ games with $n=1$ (i.e., functions $k \cdot (x_n \oplus y_n) + \delta$ with $k,\delta \in \{0, 1\}$) 
are uninteresting because there is a simple classical strategy that wins these games, namely $a(x_n) = k \cdot x_n $ and $b(y_n) = k \cdot y_n+ \delta$ so that $w_c = w_q = w_{ns}$ for them. The interesting games are with $n > 1$ where $w_c = w_q < w_{ns} = 1$. The fact that $w_{ns} = 1$ simply follows from the fact that these games belong to the class of XOR games, for which a no-signaling strategy always exists to win the game, namely one where $P(a,b | \textbf{x}_n, \textbf{y}_n) = \frac{1}{2}$ when $a \oplus_2 b = f(\textbf{x}_n, \textbf{y}_n)$ and $0$ otherwise.

\begin{thm}
The binary outcome non-local computation game $NLC$ inequalities for arbitrary functions $f(x_1 \oplus y_1, \dots, x_n \oplus y_n)$ (Eq.(\ref{eq:binary-nlc-fn})) and arbitrary probability distributions $p(\textbf{x}_n, \textbf{y}_n) = \frac{1}{2^n}p(x_1 \oplus y_1, \dots, x_n \oplus y_n)$ (Eq.(\ref{eq:binary-nlc-pr})) do not define facets of the local Bell polytope or the correlation polytope for any number $n > 1$ of input bits. 
\end{thm}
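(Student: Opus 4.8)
The plan is to use the decomposition strategy suggested by \lemref{facet-lem}: rather than counting affinely independent saturating vertices, I would exhibit the $NLC_2$ inequality as a nontrivial combination of two (or more) other valid inequalities for the local polytope, each of which already defines a supporting hyperplane. By the lemma, a facet cannot arise as the intersection of two distinct faces, so displaying such a decomposition immediately proves non-tightness. The key observation I would exploit is the special product structure in Eq.~(\ref{condition}): for $d=2$ the winning condition $a \oplus b = f(\mathbf{z}_n)$ with $f$ of the form $g(\mathbf{z}_{n-1}) \cdot z_n$ means that whenever the ``control'' coordinate $z_n = x_n \oplus y_n$ equals $0$, the target function is $0$, while when $z_n = 1$ the target equals $g(\mathbf{z}_{n-1})$. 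This suggests splitting the success functional according to the value of $z_n$, i.e. conditioning on whether $x_n = y_n$ or $x_n \neq y_n$.

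Concretely, first I would rewrite the $NLC_2$ success probability $\omega = \frac{1}{2^n}\sum p(\mathbf{z}_n) p(a \oplus b = f(\mathbf{z}_n))$ by grouping the $2^n$ input pairs $(\mathbf{x}_n,\mathbf{y}_n)$ into two blocks determined by the parity $x_n \oplus y_n$. In the $z_n = 0$ block the constraint becomes $a \oplus b = 0$ independently of $g$, whereas in the $z_n = 1$ block it is $a \oplus b = g(\mathbf{z}_{n-1})$. Each block is itself a sum of probabilities of the form $p(a \oplus b = \text{something})$, and I would argue that each of these partial sums is individually bounded by a corresponding partial classical value, giving two separate valid inequalities whose right-hand sides add up to $\omega_c$. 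The second step is to verify that each partial inequality is actually achievable --- that it defines a genuine supporting hyperplane rather than a strict inequality --- which follows because one can independently optimize the two blocks using deterministic strategies that agree on the shared measurement settings; here the fact that for $n>1$ the control variable $g$ genuinely depends on the first $n-1$ coordinates is what makes the two blocks nontrivially separate faces.

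The third step is to check that the two faces so obtained are genuinely \emph{distinct}, i.e. that the decomposition is not the trivial one where both inequalities coincide with the original. This is where I expect the main obstacle: I must confirm that the classical maximum of the full functional equals the sum of the two block maxima (so that saturating $\omega_c$ forces saturating both blocks simultaneously), and that there exist classical vertices saturating one block but not the other, so $F \neq G$ in the notation of \lemref{facet-lem}. Establishing the additivity of the classical bound across the two blocks is the crux: I would show that an optimal deterministic strategy can be chosen so that Alice's and Bob's outputs on inputs with $x_n = 0$ versus $x_n = 1$ decouple, which uses that the $n$-th bit is uniformly distributed and enters $f$ only through the product, so flipping the strategy on the $z_n=1$ block does not affect the $z_n=0$ block's value. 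For the correlation polytope the argument is the same once the functional is re-expressed in terms of the full correlators $\langle A_x B_y \rangle$, since these $NLC_2$ games are $\textsc{xor}$ games and hence already live in the correlation subspace; I would simply repeat the block decomposition there and invoke \lemref{facet-lem} applied to the correlation polytope. The delicate point throughout is ruling out the degenerate case where the decomposition collapses, which is precisely why the hypothesis $n>1$ is essential --- for $n=1$ the $z_n=1$ block is empty of nontrivial structure and the argument correctly fails.
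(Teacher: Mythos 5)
Your overall strategy --- invoke Lemma \ref{facet-lem} and decompose the $NLC$ functional into two valid sub-inequalities whose coefficient vectors and classical bounds both add up to those of the original --- is exactly the paper's strategy. But your choice of decomposition fails at the step you yourself identify as the crux. You propose to split the input pairs according to the parity $z_n = x_n \oplus y_n$. This parity is a \emph{nonlocal} quantity: neither Alice nor Bob knows which block a given round belongs to, so a local deterministic strategy cannot "independently optimize the two blocks." Concretely, take $n=2$, $f(z_1,z_2)=z_1\cdot z_2$, uniform $p$. The $z_2=0$ block demands $a\oplus b=0$ and is won perfectly by constant outputs (block value $1/2$); the $z_2=1$ block demands $a\oplus b = x_1\oplus y_1$ and is won perfectly by $a=x_1$, $b=y_1$ (block value $1/2$). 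The block maxima sum to $1$, but $\omega_c(NLC)=3/4$. So $\omega_c(\text{block }0)+\omega_c(\text{block }1)>\omega_c(NLC)$, the additivity requirement of Lemma \ref{facet-lem} is violated, and the sum of the two supporting hyperplanes is a strictly looser inequality than the $NLC$ one --- it proves nothing about tightness. The two block-optimal strategies are incompatible precisely because distinguishing the blocks is the whole content of the nonlocal computation task.

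The paper avoids this by conditioning on a quantity one party \emph{does} know locally: it splits on Alice's first input bit, defining subgames $NLC^{x_1=0}_{\text{sub}}$ and $NLC^{x_1=1}_{\text{sub}}$ whose coefficient vectors have disjoint support over Alice's settings. Alice can trivially use different strategies in the two subgames; the nontrivial work (Lemma \ref{lem:class-strat-nlc}) is showing that Bob's optimal Hadamard strategy is the same (up to sign) for both, using the symmetry $\Phi^{x_1=j,y_1=k}=\Phi^{x_1=j\oplus 1,y_1=k\oplus 1}$, so that $\omega_c(NLC^{x_1=j}_{\text{sub}})=\tfrac12\omega_c(NLC)$ and additivity holds. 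A secondary point: the $d=2$ theorem is stated for \emph{arbitrary} $f(x_1\oplus y_1,\dots,x_n\oplus y_n)$, not only the product form of Eq.~(\ref{condition}); that restriction is needed only for $d\geq 3$. Your argument leans on the product structure from the outset, which would narrow the scope of the claim even if the decomposition were sound.
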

\begin{proof}
The idea of the proof is to exhibit a decomposition of the NLC game Bell inequality for function $f(x_1 \oplus y_1, \dots, x_n \oplus y_n)$ and probability distribution $\frac{1}{2^n} p(x_1 \oplus y_1, \dots, x_n \oplus y_n)$ as a sum of inequalities that are valid for the Bell polytope and the correlation polytope so that by Lemma \ref{facet-lem}, the NLC inequality cannot define a facet of either of these polytopes. More precisely, we identify two subgames $NLC^{x_1 = 0}_{\text{sub}}$ and $NLC^{x_1 = 1}_{\text{sub}}$ that define inequalities $\textbf{c}_{NLC_0} \cdot \textbf{p} \leq \omega_{c}(NLC^{x_1 = 0}_{\text{sub}})$ and $\textbf{c}_{NLC_1} \cdot \textbf{p} \leq \omega_{c}(NLC^{x_1 = 1}_{\text{sub}})$ that themselves can be violated in quantum theory, but which sum up to the NLC Bell inequality which admits no quantum violation, i.e., 
\begin{eqnarray}
\label{eq:nlc-decomp}
\textbf{c}_{NLC} &=& \textbf{c}_{NLC_0} + \textbf{c}_{NLC_1} \nonumber \\
\omega_{c}({NLC}) &=& \omega_{c}(NLC^{x_1 = 0}_{\text{sub}}) + \omega_{c}(NLC^{x_1 = 1}_{\text{sub}}). 
\end{eqnarray}

We define $NLC^{x_1 = j}_{\text{sub}}$ as the subgame of the $NLC$ game with $x_1$ fixed to the value $j$ for $j=0,1$. In other words, $NLC^{x_1 = j}_{\text{sub}}$ corresponds to the function $f(j \oplus y_1, x_2 \oplus y_2, \dots, x_n \oplus y_n)$ and probability distribution $p(j, x_2, \dots, x_n, y_1, \dots, y_n) = \frac{1}{2^n}p(j \oplus y_1, x_2 \oplus y_2, \dots, x_n \oplus y_n)$. Notice that the maximum algebraic and no-signaling value of the $NLC^{j}_{\text{sub}}$ Bell inequality is $\frac{1}{2}$. Clearly, by definition $\textbf{c}_{NLC} = \sum_{j=0,1} \textbf{c}_{NLC_j}$, our aim is now to show that $\omega_{c}(NLC^{x_1 = j}_{\text{sub}}) = \frac{1}{2} \omega_{c}({NLC})$ so that Eq.(\ref{eq:nlc-decomp}) holds. We do this by showing that the optimal classical strategy for the subgame $NLC^{x_1 = j}_{\text{sub}}$ is the same as the optimal classical strategy for the game $NLC$ restricted to the case $x_1 = j$.  

\begin{lemma}
\label{lem:class-strat-nlc}
$\omega_{c}(NLC^{x_1 = j}_{\text{sub}}) = \frac{1}{2} \omega_{c}({NLC})$.
\end{lemma}
\begin{proof}
The game matrix $\Phi^{x_1 =j}$ for the subgame $NLC^{x_1 = j}_{\text{sub}}$ is given by the entries
\begin{eqnarray}
\Phi^{x_1 = j}_{(j,x_2 \dots, x_n), (y_1, \dots, y_n)} &=& \frac{1}{2^n} (-1)^{f(j \oplus y_1, x_2 \oplus y_2, \dots, x_n \oplus y_n)} \nonumber \\ &&\times p(j \oplus y_1, x_2 \oplus y_2, \dots, x_n \oplus y_n). \nonumber \\
\end{eqnarray} 
We now further consider the subgames $NLC^{x_1 = j, y_1 = k}_{\text{sub}}$ with $k = 0,1$, i.e., with game matrices $\Phi^{x_1 = j, y_1 =k}$ that have entries
\begin{eqnarray}
\Phi^{x_1 = j, y_1 =k}_{(j,x_2 \dots, x_n), (k,y_2 \dots, y_n)} &=& \frac{1}{2^n} (-1)^{f(j \oplus k, x_2 \oplus y_2, \dots, x_n \oplus y_n)} \cdot \nonumber \\ && p(j \oplus k, x_2 \oplus y_2, \dots, x_n \oplus y_n). \nonumber \\
\end{eqnarray} 
Now, notice that the subgame $NLC^{x_1 = j, y_1 = k}_{\text{sub}}$ itself has the structure of an NLC game with $n-1$ bits input to each party. This in particular implies by the results of \cite{NLC} that the corresponding game matrix $\Phi^{x_1 = j, y_1 =k}$ is diagonal in the basis formed by the (normalized) Hadamard vectors $|u_l\rangle$ with $l = 1, \dots, 2^{n-1}$,
\begin{eqnarray}
\Phi^{x_1 =j, y_1 = k} = \sum_{l=1}^{2^{n-1}} \lambda^{x_1 =j, y_1 =k}_l | u_l \rangle \langle u_l |.
\end{eqnarray}
The maximum no-signaling value of the subgame $NLC^{x_1 = j, y_1 = k}_{\text{sub}}$ is 
\begin{equation}
\omega_{ns}(NLC^{x_1 = j, y_1 = k}) = \sum_{x_2, \dots, x_n, y_2, \dots, y_n} \Phi^{x_1 = j, y_1 =k}_{(j,x_2 \dots, x_n), (k,y_2 \dots, y_n)}.
\end{equation} 
As shown in \cite{NLC} the optimal classical strategy for the game $NLC^{x_1 = j, y_1 = k}_{\text{sub}}$ is given  by the eigenvector $|u_{l_{max}}\rangle$ corresponding to the eigenvalue of maximum absolute value $\vert \lambda^{x_1 = j, y_1 = k}_{l_{max}} \vert$. Alice outputs according to $(-1)^{a(x_1=j, x_2, \dots,x_n)} = |u_{l_{max}}\rangle_{x_2, \dots, x_n}$ and Bob outputs according to $(-1)^{b(y_1=k, y_2, \dots,y_n)} = sgn(\lambda^{x_1 =j, y_1 =k}_{l_{max}}) \cdot |u_{l_{max}}\rangle_{y_2, \dots, y_n}$. We therefore see that the game matrix we are interested in $\Phi^{x_1 = j}$ can be written as
$\Phi^{x_1 = j} = \sum_{l=1}^{2^{n-1}} |u_l \rangle \langle v_l|$,
where 
$|v_l \rangle = |w_l\rangle \otimes |u_l\rangle$, 
with 
\begin{equation}
|w_l \rangle = \begin{bmatrix} \lambda^{x_1 = j, y_1 =0}_l \\  \lambda^{x_1 = j, y_1 =1}_l \end{bmatrix}.
\end{equation}
%
The optimal classical value of the game $NLC^{x_1 = j}_{\text{sub}}$ is given as 
\begin{eqnarray}
\omega_c(NLC^{x_1 = j}_{\text{sub}}) &=& \frac{1}{2}\left[ 1 + \max_{|s_A \rangle, |s_B \rangle} \langle s_A | \Phi^{x_1 =j} | s_B \rangle \right]\nonumber \\
&=& \frac{1}{2}\left[1  + \max_{|s_A \rangle, |s_B \rangle} \sum_{l=1}^{2^{n-1}} \langle s_A | u_l \rangle \langle v_l | s_B \rangle\right], \nonumber \\
\end{eqnarray}
where the maximization is over all vectors $|s_A \rangle, |s_B \rangle$ with $\pm 1$ entries. 

Now, any general strategy vector can be written in terms of the basis formed by the Hadamard vectors, therefore 
the classical strategies $| s_A \rangle, |s_B \rangle$ can be writen as 
\begin{eqnarray}
|s_A \rangle &=& \sqrt{2^{n-1}}  \sum_{i_A=1}^{2^{n-1}} \gamma^{A}_{i_A} | u_{i_A} \rangle, \nonumber \\
|s_B \rangle &=&  \sqrt{2^{n-1}} \sum_{i_B=1}^{2^{n-1}} \sum_{\mu_B=0,1} \gamma^{B}_{i_B,\mu_B} \begin{bmatrix}
    1 \\ 
    (-1)^{\mu_B}
\end{bmatrix}
\otimes |u_{i_B} \rangle,
\end{eqnarray}
where, $\sum_{i_A} |\gamma^{A}_{i_A}|^2=1$ and $\sum_{i_B, \mu_B} |\gamma^{B}_{i_B,\mu_B}|^2=1$, 
reflects the fact that deterministic classical strategy vectors for Alice and Bob have norms $2^{n-1}$ and $2^n$ respectively.
Optimizing over such strategies gives us
\begin{widetext}
\begin{eqnarray}
\label{eq:cl-strat}
\omega_c(NLC^{x_1 = j}_{\text{sub}})
&\leq& \frac{1}{2} \left[1+2^{n-1} \max_{\{\gamma^{A}_{i_A}\}, \{\gamma^{B}_{i_B, \mu_B}\}} \sum_{i_A, i_B, \mu_B} (\gamma^{A}_{i_A})^{*} \gamma^{B}_{i_B, \mu_B} \delta_{i_A, i_B} \cdot  \left|  \lambda^{x_1 = j, y_1 =0}_{i_B} \pm  \lambda^{x_1 = j, y_1 =1}_{i_B} \right| \right], \nonumber \\
&\leq &\frac{1}{2} \left[1+2^{n-1} \max_l \sum_{k=0,1} \vert \lambda^{x_1 =j, y_1 =k}_{l} \vert \right].
\end{eqnarray}
\end{widetext}
where we have used the constraints $\sum_{i_A} |\gamma^{A}_{i_A}|^2=1$ and $\sum_{i_B, \mu_B} |\gamma^{B}_{i_B,\mu_B}|^2=1$ and the Cauchy-Schwarz inequality to bound the value.  

The upper bound in Eq.(\ref{eq:cl-strat}) is achieved by choosing a Hadamard strategy, i.e., with 
\begin{eqnarray}
\label{eq:cl-strat-2}
| s_A \rangle &=& \sqrt{2^{n-1}} | u_l \rangle, \nonumber \\
|s_B \rangle &=& \sqrt{2^{n-1}} \begin{bmatrix} sgn(\lambda^{x_1 =j, y_1 =0}_{l}) \\ sgn(\lambda^{x_1 =j, y_1 =1}_{l})  \end{bmatrix} \otimes |u_l \rangle
\end{eqnarray}
 for the value of $l$ that maximizes $\sum_{k=0,1} \vert \lambda^{x_1 =j, y_1 =k}_{l} \vert $. The maximum value thus achieved is 
\begin{equation}
\omega_{c}(NLC^{x_1 = j}_{\text{sub}}) = \frac{1}{2} \left[1 + 2^{n-1} \sum_{k=0,1} \vert \lambda^{x_1 =j, y_1 =k}_{l} \vert \right].
\end{equation} 
Now, the NLC game matrices have the symmetry \cite{NLC} that $\Phi^{x_1 = j, y_1 = k} = \Phi^{x_1 = j \oplus 1, y_1 = k \oplus 1}$. 
This implies that the game matrices $\Phi^{x_1 =j, y_1 = k}$ for $j=0,1$ are equivalent to each other under a relabeling of the inputs for Bob, so that we have 
\begin{equation}
\label{eq:subgame-eq}
\omega_{c}(NLC^{x_1 = 0}_{\text{sub}}) = \omega_{c}(NLC^{x_1 = 1}_{\text{sub}}).
\end{equation} 

Moreover, we have shown that the maximum classical value of both the subgames $NLC^{x_1 = j}_{\text{sub}}$ for $j=0,1$ is
achieved by a Hadamard strategy in Eq.(\ref{eq:cl-strat-2}), i.e., $|s_A^{j} \rangle$ is given by a Hadamard vector of 
length $2^{n-1}$ and $|s_B^{j} \rangle$ is given by a Hadamard vector of length $2^n$. As mentioned,
the $NLC$ game matrices have the property that $\Phi^{x_1 = j, y_1 = k} = \Phi^{x_1 = j \oplus 1, y_1 = k \oplus 1}$ so 
that $sgn(\lambda^{x_1 =j, y_1 = k}_{l}) = sgn(\lambda^{x_1 =j \oplus 1, y_1 = k \oplus 1}_{l})$ giving that Bob's optimal 
strategy vector $|s_B^{j} \rangle$ remains the same for $j=0,1$ up to an overall $\pm$ sign. This implies that Alice's optimal
strategy vector $|s_A^{j} \rangle$ to achieve $\omega_c(NLC^{x_1 = j}_{\text{sub}})$ also remains the same for $j=0,1$ up to 
a $\pm$ sign. Since the direct sum of the two Hadamard vectors of length $2^{n-1}$ is a Hadamard vector of length
$2^n$, this defines a Hadamard strategy for both Alice and Bob for the game $NLC$ that achieves the value 
\begin{equation}
\label{eq:nlc-opt-strat}
\omega_c(NLC) \geq \omega_{c}(NLC^{x_1 = 0}_{\text{sub}}) + \omega_{c}(NLC^{x_1 = 1}_{\text{sub}}).
\end{equation}
Since by definition of a subgame, $\omega_c(NLC) \leq \omega_{c}(NLC^{x_1 = 0}_{\text{sub}}) + \omega_{c}(NLC^{x_1 = 1}_{\text{sub}})$, we have equality in Eq.(\ref{eq:nlc-opt-strat}). Then using Eq.(\ref{eq:subgame-eq}), we obtain the statement of the Lemma. 
\end{proof}
Remark that the quantum value of the games $NLC^{x_1=j}_{\text{sub}}$ themselves does not have to equal to their classical 
value, as the bound in Eq.(\ref{eq:cl-strat}) of Lemma \ref{lem:class-strat-nlc} is derived through a maximization over 
$\pm 1$ vectors $|s_A \rangle, |s_B\rangle$ and hence only applies to the classical value and not the quantum value 
of these games. By Lemma \ref{lem:class-strat-nlc} then we 
have $\omega_{c}(NLC^{x_1 = j}_{\text{sub}}) = \frac{1}{2} \omega_{c}({NLC})$, so that Eq.(\ref{eq:nlc-decomp}) holds, giving that the $NLC$ games do not constitute facets of the Bell polytope. Since the decomposition in Eq.(\ref{eq:nlc-decomp}) is into XOR game inequalities that are valid for the correlation polytope as well (since they do not involve any local marginal terms), the $NLC$ game Bell inequalities do not constitute facets of the correlation polytope either. 
\end{proof}

\subsection{$NLC_d$ LINEAR games}
Let us now proceed to investigate the tightness of $NLC_d$ games defined by the functions $f$ in Eq.(\ref{condition}) and probability distributions $p(\mathbf{x}_n,\mathbf{y}_n)$ given by Eq. (\ref{condition2}) for arbitrary prime $d$ and arbitrary $n$ input dits. As we have already seen, this restriction on the functions is necessary for these inequalities to define faces of the quantum set $\mathcal{Q}$. 

As shown in \cite{RAM}, these games are composed of $d$ different subgames that are the building blocks of the $NLC_d$ game matrices $\Phi_k$ with $k \in \{1, \dots, d-1\}$. These subgames take the form $G(t) := \{a \oplus_d b = t \cdot (x \oplus_d y)\}$ where $t \in \{0, \dots, d-1\}$, and they can be readily seen to be won by a classical strategy where Alice outputs $a(x) = t \cdot x$ and Bob outputs $b(y) = t \cdot y$. For $i \in \{0, \dots, d-1\}$, let $\lambda_{NLC_d}(i)$ denote the weighted number of times the game $G(i)$ occurs in the first block of $d$ rows of the $NLC_d$ game matrix $\Phi_1$, i.e.
\begin{equation}
\label{eq:lambda-d}
\lambda_{NLC_d}(i) :=\sum_{\stackrel{\textbf{y}_{n-1} }{\text{ s.t.} g(y_1, \dots, y_{n-1}) = i}} p(y_1, \dots, y_{n-1}).
\end{equation}
Let $\Lambda_{NLC_d} := \max_{i \in \{0, \dots, d-1\}} \lambda_{NLC_d}(i)$ and let $i_{\text{max}}$ denote the value of $i$ for which this maximum is achieved. Observe that $1/d \leq \Lambda_{NLC_d} \leq 1$. We showed in \cite{RAM} that the optimal classical strategy in the $NLC_d$ game is for Alice to output $a(\mathbf{x}_n) = i_{\text{max}} \cdot x_n$ and for Bob to output $b(\mathbf{y}_n) = i_{\text{max}} \cdot y_n$. Such a strategy achieves the optimal value for the game
\begin{equation}
\omega_c(NLC_d) = \omega_q(NLC_d) = \frac{1}{d} \left( 1 + (d-1) \Lambda_{NLC_d} \right).
\end{equation} 
We now show that for games with $\Lambda_{NLC_d} \geq 1/2$, a decomposition into other valid inequalities for the corresponding Bell polytope can be found so that the $NLC_d$ games with this property also do not define facets of the Bell polytope.

\begin{prop}
Any $NLC_d$ game for arbitrary prime $d$ and arbitrary number of input dits $n$ satisfying $\Lambda_{NLC_d} \geq 1/2$ does not define a facet of the corresponding Bell polytope.   
\end{prop}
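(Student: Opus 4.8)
The plan is to mirror the binary-outcome case just treated (and its Lemma~\ref{lem:class-strat-nlc}): I would exhibit a decomposition of the $NLC_d$ functional into $d$ pieces that are each valid supporting hyperplanes for the Bell polytope, and then invoke Lemma~\ref{facet-lem}. Concretely, fix Alice's first input dit $x_1=j$ and let $\mathbf{c}^{(j)}$ be the restriction of the $NLC_d$ coefficient vector to inputs with $x_1=j$, so that $\mathbf{c}_{NLC_d}=\sum_{j\in\mathbb{Z}_d}\mathbf{c}^{(j)}$ by construction. Writing $z_1=x_1\oplus_d y_1=j\oplus_d y_1$, the subgame $NLC^{x_1=j}_{\mathrm{sub}}$ is, conditioned on Bob's first input $y_1$ (which now fixes $z_1$, since $x_1$ is constant), a $p$-weighted mixture over $z_1$ of the reduced games $NLC_{g_{z_1}}$ on $n-1$ dits, with $g_{z_1}(\cdot):=g(z_1,\cdot)$. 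The crucial new feature relative to the full game is that in the subgame Bob \emph{knows} $z_1$.

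The heart of the argument is the analogue of Lemma~\ref{lem:class-strat-nlc}, namely $\omega_c(NLC^{x_1=j}_{\mathrm{sub}})=\tfrac1d\,\omega_c(NLC_d)$ for every $j$, and this is exactly where the hypothesis $\Lambda_{NLC_d}\ge 1/2$ enters. One inequality is easy: the globally optimal strategy $a=i_{\text{max}}x_n$, $b=i_{\text{max}}y_n$, restricted to $x_1=j$, wins precisely when $z_n=0$ or $g(\mathbf{z}_{n-1})=i_{\text{max}}$ (here primality of $d$ makes $\mathbb{Z}_d$ a field, so $(i_{\text{max}}-g)z_n=0$ forces one of these), which yields conditional success $\Lambda_{NLC_d}+(1-\Lambda_{NLC_d})/d=\omega_c(NLC_d)$ and hence $\omega_c(NLC^{x_1=j}_{\mathrm{sub}})\ge\tfrac1d\omega_c(NLC_d)$. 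The real content is the reverse bound: Bob's knowledge of $z_1$ must confer no advantage to the classical players.

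For $n=2$, where $g_{z_1}=g(z_1)$ is a scalar, this reverse bound becomes a clean combinatorial statement. The maximal conditional success given $x_1=j$ equals $\tfrac1d\max_a\sum_{t}\lambda_{NLC_d}(t)\,N_t(a)$ (cf.\ Eq.~(\ref{eq:lambda-d})), where $a:\mathbb{Z}_d\to\mathbb{Z}_d$ is Alice's output function and $N_t(a)$ is the maximal number of points of the graph $\{(x,a(x))\}$ lying on one affine line of slope $t$ (Bob's best response to a fixed $a$ in $G(t)$ wins exactly the fraction $N_t(a)/d$). Since $\mathbb{Z}_d$ is a field, two lines of distinct slopes meet in at most one point, so $N_{t_1}(a)+N_{t_2}(a)\le d+1$ for $t_1\ne t_2$; taking $t_2=i_{\text{max}}$ and using $\sum_{t\ne i_{\text{max}}}\lambda_{NLC_d}(t)=1-\Lambda_{NLC_d}$ gives
\[
\sum_{t}\lambda_{NLC_d}(t)\,N_t(a)\ \le\ N_{i_{\text{max}}}(a)\,(2\Lambda_{NLC_d}-1)+(1-\Lambda_{NLC_d})(d+1)\ \le\ 1+(d-1)\Lambda_{NLC_d},
\]
where the last step uses $2\Lambda_{NLC_d}-1\ge 0$ to push $N_{i_{\text{max}}}(a)$ up to $d$, and the linear strategy $a=i_{\text{max}}x$ attains equality. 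Thus $\Lambda_{NLC_d}\ge 1/2$ is precisely the threshold at which Alice's naive linear strategy cannot be improved upon, forcing $\omega_c(NLC^{x_1=j}_{\mathrm{sub}})=\tfrac1d\omega_c(NLC_d)$.

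For $n>2$ I expect the same identity by induction on $n$: conditioned on $z_1$ the reduced game is an $NLC_d$ game on $n-1$ dits, and one must bound Bob's best response to an \emph{arbitrary} Alice strategy and verify that committing to $a=i_{\text{max}}x_n$ stays optimal once $\Lambda_{NLC_d}\ge 1/2$. Promoting the collinearity estimate $N_{t_1}+N_{t_2}\le d+1$ to this recursive best-response setting is the step I expect to be the main obstacle. Granting it, the proof closes as in the binary case: the single global strategy attains every subgame optimum simultaneously, so $\sum_j\omega_c(NLC^{x_1=j}_{\mathrm{sub}})=\omega_c(NLC_d)$, whence the $NLC_d$ face equals $\bigcap_j F_j$ of the subgame faces $F_j$. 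Finally the $F_j$ are not all equal: the functionals $\mathbf{c}^{(j)}$ have disjoint input supports and Alice's deterministic outputs on different $x_1$-slices are independent, so one can optimise slice $0$ while sabotaging slice $1$ to produce a box in $F_0\setminus F_1$; equivalently, as in the binary case, each subgame is quantum-advantaged (Bob's knowledge of $z_1$ induces a $\mathrm{CHSH}_d$-type constraint) whereas $NLC_d$ is not. Hence the $NLC_d$ face is a proper intersection of distinct faces and, by Lemma~\ref{facet-lem}, is not facet-defining for the Bell polytope.
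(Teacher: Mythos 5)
Your argument is essentially the paper's for $n=2$: the two-slope collinearity bound $N_{t_1}(a)+N_{t_2}(a)\le d+1$ (exploiting that $\mathbb{Z}_d$ is a field so affine lines of distinct slopes meet in at most one point), the weighted estimate $\sum_t\lambda(t)N_t(a)\le 1+(d-1)\Lambda_{NLC_d}$ with the threshold $\Lambda_{NLC_d}\ge 1/2$ entering exactly where you put it, and the conclusion via Lemma~\ref{facet-lem} are all the same as in the paper. Your observation that one must also check the subgame faces are genuinely distinct is a fair point of care that the paper glosses over.

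However, there is a genuine gap for $n>2$, and you have correctly located it yourself: your decomposition fixes only Alice's first dit $x_1=j$, so in each subgame Alice still holds $n-1$ free dits and plays a single strategy $a(x_2,\dots,x_n)$ simultaneously against the $d$ reduced games indexed by $z_1$ (to each of which Bob can best-respond separately). The collinearity lemma, which is a statement about a function of a \emph{single} dit versus affine lines, does not apply to such a strategy, and the induction you gesture at is not sound as stated: the reduced games $NLC_{g_{z_1}}$ need not inherit the hypothesis $\Lambda\ge 1/2$, so the inductive hypothesis may simply fail for them. The paper avoids all of this by choosing a finer decomposition: it fixes \emph{all} of Alice's first $n-1$ dits at once (into $d^{n-1}$ subgames rather than $d$), so that in each subgame Alice receives only the single dit $x_n$ while Bob keeps all $n$ dits. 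Each subgame is then a $\lambda(i)$-weighted union of the $d\times d$ blocks $G(i)$ with Alice's single-dit strategy shared across blocks, and your $n=2$ collinearity argument applies verbatim, with no induction needed. With that change of decomposition your proof closes for all $n$; as written, it establishes the proposition only for $n=2$.
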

\begin{proof}
As before, the proof works by a decomposition into valid face-defining inequalities $NLC_d^{s}$ of the corresponding Bell polytope. The subgames $NLC_d^{s}$ correspond to a function $g(\tilde{x}_1 \oplus_d y_1, \dots, \tilde{x}_{n-1} \oplus_d y_{n-1}) \cdot (x_{n} \oplus_d y_{n})$ with fixed inputs $\tilde{x}_i \in \{0,\dots, d-1\}$, which for definiteness we fix to $\tilde{x}_i = 0$ for $i \in \{0, \dots, n-1\}$. In other words, in the sub-game $NLC_d^{s}$, Alice receives a single dit input $x_n$ while Bob receives $n$ dits $y_1, \dots, y_n$ with probability $\frac{1}{d^{n+1}} p(y_1, \dots, y_{n-1})$. The probabilities in the sub-game do not sum to $1$, the maximum no-signaling value of the sub-game is $\frac{1}{d^{n-1}}$ with a factor of $d^2$ coming from the $d^2$ choices of $(x_n, y_n)$ for fixed $(y_1, \dots, y_{n-1})$ which form the $d \times d$ subgames $G(j)$.

Recall that $\Lambda_{NLC_d} := \max_{i \in \{0, \dots, d-1\}} \lambda_{NLC_d}(i)$ where $\lambda_{NLC_d}(i)$ is given by Eq.(\ref{eq:lambda-d}). We show now that when $\Lambda_{NLC_d} \geq \frac{1}{2}$, an optimal classical strategy for each of the subgames $NLC_d^{s}$ is identical to the optimal classical strategy for the game $NLC_d$ itself, giving that the Bell inequality corresponding to the game $NLC_d$ can be decomposed into valid face-defining inequalities corresponding to the subgames $NLC_d^{s}$ as in Lemma \ref{facet-lem}. As explained earlier, from \cite{RAM} we know that the optimal classical strategy in the $NLC_d$ game is for Alice to output $a(\mathbf{x}_n) = i_{\text{max}} \cdot x_n$ and for Bob to output $b(\mathbf{y}_n) = i_{\text{max}} \cdot y_n$. This strategy for the $NLC_d^{s}$ subgame gives the following lower bound
\begin{equation}
\label{eq:class-nlcd-sub}
\omega_c(NLC_d^{s}) \geq \frac{1}{d^n} \left( 1 +  (d-1) \Lambda_{NLC_d} \right). 
\end{equation}

Let us now show that any other classical strategy for $NLC_d^{s}$ does not achieve a larger value than in (\ref{eq:class-nlcd-sub}) so that the inequality there is in fact an equality. To do this, we examine classical strategies for the subgames $G(t)$ defined by the winning constraint $\{a \oplus_d b = t \cdot (x \oplus_d y)\}$. Fix any $t \in \{0, \dots, d-1\}$ and consider any arbitrary deterministic strategy for Alice, i.e., a set of Alice's deterministic outputs $a(x) \in \{0, \dots, d-1\}$. For $k \in \{0, \dots, d-1\}$, let $\mathcal{A}^{(t)}_k$ denote the set of inputs for which Alice outputs according to an optimal strategy, i.e., $\mathcal{A}^{(t)}_k = \{x : a(x) = t \cdot x \oplus_d k\}$, let $\mathcal{A}^{(t)}_{k_{\text{max}}}$ denote the set of maximum cardinality $m_t = |\mathcal{A}^{(t)}_{k_{\text{max}}}| = \max_{k} | \mathcal{A}^{(t)}_k|$. Note that $1 \leq m_t \leq d$.  
Bob's optimal strategy for the game $G(t)$ is then clearly seen to be to output $b(y) = t \cdot y \ominus_d k$, such a strategy satisfies $m_t d$ of the $d^2$ winning constraints in $G(t)$. Now, given Alice's deterministic strategy, consider game $G(t')$ for any other value of $t' \neq t$ and the analogous sets $\mathcal{A}^{(t')}_{k'}$ with $m_{t'} = |\mathcal{A}^{(t')}_{{k'}_{\text{max}}}| = \max_{k'} | \mathcal{A}^{(t')}_{k'}|$. Due to the fact that for any $k' \in \{0, \dots, d-1\}$, the equation
\begin{equation}
t \cdot x \oplus_d k_{\text{max}} = t' \cdot x \oplus_d k',
\end{equation} 
is satisfied for at most one value of $x \in \{0, \dots, d-1\}$, we have that $m_{t'} \leq d - m_{t} + 1$. This gives that for any deterministic strategy by Bob for the game $G(t')$, at most $(d - m_t + 1) d$ of the $d^2$ winning constraints in $G(t')$ can be satisfied. 

Now the subgame $NLC_d^{s}$ is composed of $\Lambda_{NLC_d}$ blocks of $G(t)$ for some fixed $t$ and $(1- \Lambda_{NLC_d})$ blocks of games $G(t')$ for $t' \neq t$. The above analysis then implies that any classical deterministic strategy for $NLC_d^{s}$ can at best achieve the value
\begin{equation}
\label{eq:class-nlcd-sub2}
\omega_c(NLC_d^{s}) \leq \frac{1}{d^n} \left[ (d-m_t+1) + \Lambda_{NLC_d} (2 m_t - d - 1) \right].
\end{equation}
Comparing Eqs.(\ref{eq:class-nlcd-sub2}) and (\ref{eq:class-nlcd-sub}), we see that with $\Lambda_{NLC_d} \geq \frac{1}{2}$ and $1 \leq m_t \leq d$, $\omega_c(NLC_d^{s}) = \frac{1}{d^n}\left( 1 + (d-1) \Lambda_{NLC_d} \right)$. The fact that the optimal classical strategy for $NLC_d$ is also optimal for the subgames $NLC_d^{s}$ then implies by Lemma \ref{facet-lem} that the Bell inequalities corresponding to the games $NLC_d$ are not facets of the Bell polytope. Again, note that the inequalities $NLC_d^{s}$ themselves may be violated in quantum theory.   
\end{proof}

\section{Faces in the (2,2,2,2,2) CHSH polytope}
\label{sec:CHSH-face}
We have so far seen that many of the paradigmatic Bell inequalities with no quantum violation only describe low-dimensional faces of the corresponding Bell and correlation polytopes. In this section, we look at the simplest and most well-studied Bell scenario, namely that of two parties each measuring two binary observables; explicitly Alice performs binary measurements $A_1, A_2$ and Bob measures $B_1, B_2$. The Bell polytope in this scenario is an $8$-dimensional polytope with the only non-trivial facets known to be the CHSH inequalities (up to local relabelings of inputs and outputs and exchange of parties), and the correlation polytope in this scenario is a $4$-dimensional polytope also with the tight inequalities being the CHSH inequalities. In the set of quantum boxes $\mathcal{Q}$, and the set $\mathcal{E}$ of quantum correlations $\langle A_i B_j \rangle$, these inequalities are well-known to be violated up to the Tsirelson bound. Here, we identify a necessary and sufficient condition for a correlation inequality in this scenario to be not violated in quantum theory, as such we identify the faces of the set of quantum correlations in this simplest scenario. 
Recall that we are not interested in the trivial facets $\langle A_i B_j \rangle = \pm 1$ that are also facets of the no-signaling polytope, we are looking for inequalities that have $\omega_c = \omega_q < \omega_{ns}$. 

Any correlation inequality here can be written up to relabelings in the form
\begin{eqnarray}
\label{eq:chsh-corr-1}
&&p_1 \langle A_1 B_1 \rangle + p_2 \langle A_1 B_2 \rangle + p_3 \langle A_2 B_1 \rangle - p_4 \langle A_2 B_2 \rangle 
\nonumber \\
&&\qquad \leq 1 - 2 \min\{p_i \},
\end{eqnarray}
where we normalize to $\sum_{i} p_i = 1$ and we may choose without loss of generality 
$\min\{p_i \} = p_4$. Equivalently, this is the XOR game
\begin{eqnarray}
\label{eq:chsh-corr}
&&p_1 P(a \oplus b = 0 | A_1, B_1) + p_2 P(a \oplus b = 0|A_1, B_2) +\nonumber \\&& p_3 P(a \oplus b = 0 | A_2, B_1) + p_4 P(a \oplus b = 1|A_2, B_2) \nonumber \\ && \qquad \leq 1 - p_4.
\end{eqnarray} 

\begin{prop}
The necessary and sufficient condition for the weighted CHSH inequality (\ref{eq:chsh-corr-1}) to describe a non-trivial face of the set of quantum correlations is that $p_4 < p_1,p_2, p_3$ and
\begin{equation}
\label{eq:chsh-face}
(p_2 p_3 + p_1 p_4)^2 \leq (p_1 + p_2)(p_1 + p_3)(p_2 - p_4)(p_3 - p_4).
\end{equation}
\end{prop}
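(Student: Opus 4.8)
The plan is to compute the quantum value $\omega_q$ of the XOR game (\ref{eq:chsh-corr}) in closed form and compare it with the classical bound $\omega_c = 1-2p_4$. By Tsirelson's theorem \cite{Tsirelson,RKM+14} the quantum correlators attainable by dichotomic observables are exactly the Gram data $\langle A_iB_j\rangle = \vec a_i\cdot\vec b_j$ of unit vectors, and since the objective is $2\times 2$ the vectors $\vec a_1,\vec a_2,\vec b_1,\vec b_2$ may be taken in $\mathbb{R}^2$. Writing the coefficient matrix as $\left(\begin{smallmatrix}p_1&p_2\\p_3&-p_4\end{smallmatrix}\right)$ and grouping the functional by Alice's input gives $\vec a_1\cdot(p_1\vec b_1+p_2\vec b_2)+\vec a_2\cdot(p_3\vec b_1-p_4\vec b_2)$. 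First I would maximise over $\vec a_1,\vec a_2$, which merely aligns each with its bracket and produces $\|p_1\vec b_1+p_2\vec b_2\|+\|p_3\vec b_1-p_4\vec b_2\|$. The remaining freedom is the single parameter $c=\vec b_1\cdot\vec b_2\in[-1,1]$, so that
\[
\omega_q=\max_{c\in[-1,1]}F(c),\qquad F(c)=\sqrt{p_1^2+p_2^2+2p_1p_2c}+\sqrt{p_3^2+p_4^2-2p_3p_4c}.
\]

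The key observation is that $F$ is strictly concave: the first term is increasing and concave in $c$, the second decreasing and concave, so $F'$ is strictly decreasing and the maximiser is unique. A direct evaluation gives $F(1)=(p_1+p_2)+|p_3-p_4|=1-2p_4=\omega_c$ (at $c=1$ all vectors are collinear, recovering a deterministic strategy). By strict concavity, $\omega_q=\omega_c$ holds \emph{iff} $F'(1)\geq 0$; otherwise the interior maximiser $c^\ast<1$ gives $\omega_q>\omega_c$. Since $F'(1)=\tfrac{p_1p_2}{p_1+p_2}-\tfrac{p_3p_4}{p_3-p_4}$, clearing the (positive) denominators shows that the absence of quantum violation is equivalent to
\[
p_1p_2p_3\ \geq\ p_4\,(p_1p_2+p_1p_3+p_2p_3),
\]
a condition I will call $(\star)$; note it came out symmetric in $p_1,p_2,p_3$, as it must (grouping by Bob instead gives the same inequality).

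It then remains to match $(\star)$ with the stated inequality (\ref{eq:chsh-face}). This follows from the polynomial identity
\[
(p_1+p_2)(p_1+p_3)(p_2-p_4)(p_3-p_4)-(p_2p_3+p_1p_4)^2=(p_1+p_2+p_3-p_4)\bigl[p_1p_2p_3-p_4(p_1p_2+p_1p_3+p_2p_3)\bigr],
\]
which I would confirm by expansion. Because $p_4=\min\{p_i\}$ forces the prefactor $p_1+p_2+p_3-p_4=\omega_c>0$, the bracket carries the sign of the left-hand side, so $(\star)\Leftrightarrow(\ref{eq:chsh-face})$. Finally I would dispose of the triviality conditions: the no-signaling correlation region here is the cube $[-1,1]^4$, so $\omega_{ns}=1$, and the inequality describes a \emph{non-trivial} face of $\mathcal Q$ (rather than a no-signaling facet or a trivial $\langle A_iB_j\rangle=\pm1$) only when $0<p_4$ and $p_4<p_1,p_2,p_3$. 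These strict inequalities are genuinely necessary: setting any $p_i$ equal to $p_4$ breaks $(\star)$ (e.g. $p_3=p_4$ makes the bracket $-p_3^2(p_1+p_2)<0$), and $p_4=0$ gives $\omega_c=\omega_{ns}=1$. Together with $(\star)\Leftrightarrow(\ref{eq:chsh-face})$ and the trivial inclusion $\omega_q\geq\omega_c$, this yields both directions of the claim.

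I expect the main obstacle to be the reduction of the full quantum optimisation to the one-variable concave function $F$: this rests on Tsirelson's characterisation and on confirming that the $2$-dimensional restriction loses nothing for the \emph{maximisation}, and on checking that $c=1$ really is the classical strategy and that the supporting hyperplane meets $\mathcal Q$ in a face of the intended (non-trivial) dimension. Once the reduction and the strict concavity of $F$ are in place, the endpoint-versus-interior dichotomy does all the work, and the remaining content is the algebraic identity above.
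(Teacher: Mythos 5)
Your proof is correct, and it reaches the stated condition by a genuinely different route from the paper. The paper's proof is a direct application of the general necessary-and-sufficient criterion of Ref.~\cite{RKM+14} for an XOR game to have no quantum advantage: it forms $\Sigma=\mathrm{diag}(p_1+p_2,\,p_3-p_4)$ and $\Lambda=\mathrm{diag}(p_1+p_3,\,p_2-p_4)$ from the optimal classical strategy, reads off $p_4<p_2,p_3$ from $\Sigma,\Lambda\succ 0$, obtains Eq.~(\ref{eq:chsh-face}) from the spectral-radius condition $\rho(\Lambda^{-1}\tilde{\Phi}^{T}\Sigma^{-1}\tilde{\Phi})=1$, and disposes of $p_4=p_1$ as the trivial case $p_4=p_1=0$. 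You instead compute $\omega_q$ from first principles via Tsirelson's vector characterisation, reduce the optimisation to the one-parameter concave function $F(c)$, and convert the absence of violation into the endpoint condition $F'(1)\ge 0$, i.e.\ $p_1p_2p_3\ge p_4(p_1p_2+p_1p_3+p_2p_3)$; your polynomial identity relating this to Eq.~(\ref{eq:chsh-face}) is correct (it checks out upon expansion), and the positivity of the prefactor $1-2p_4$ makes the two conditions equivalent. What your route buys is a self-contained elementary argument, a closed form for $\omega_q$ when there is a violation, and a form of the criterion manifestly symmetric in $p_1,p_2,p_3$; what the paper's route buys is brevity and a method that extends to larger XOR games, where no one-parameter reduction is available. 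The points that need care in your version --- that the evaluation $F(1)=\omega_c$ and the sign analysis of $F'(1)$ presuppose $p_3>p_4$ (and $p_2>p_4$ in the Bob-grouped check), and that the degenerate cases $p_i=p_4$ and $p_4=0$ collapse to trivial faces with $\omega_c=\omega_{ns}$ --- are all explicitly addressed at the end of your argument, so I see no gap.
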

\begin{proof}
In \cite{RKM+14}, we identified a necessary and sufficient condition for an XOR game to have no quantum advantage. For the game matrix given as $\tilde{\Phi} = \begin{pmatrix}
  p_1 & p_2  \\
  p_3 & -p_4 
 \end{pmatrix}$ 
with $p_4 = \min \{p_i\}$, and for the optimal classical strategy matrix $S_c = | s_A \rangle \langle s_B |$, 
the necessary and sufficient condition for the game to have $\omega_c = \omega_q$ is that $\Sigma, \Lambda \succ 0$ or $\Sigma, \Lambda \prec 0$ and 
 \begin{equation}
 \rho(\Lambda^{-1} \tilde{\Phi}^T \Sigma^{-1} \tilde{\Phi}) = 1. 
 \end{equation}
Here $\Sigma = \text{diag}({\langle i |\tilde{\Phi} |s_B \rangle \langle s_A |i \rangle}_{i=1,2} )$, $\Lambda = \text{diag}({\langle s_A |\tilde{\Phi} |i\rangle \langle i|s_B \rangle}_{i=1,2} )$ and $\rho( \cdot )$ denotes the spectral radius. 

Explicitly, for the weighted CHSH game with $p_4 = \min\{p_i\}$, one of the optimal strategies is for Alice and Bob to output correlated answers $|s_A \rangle = |s_B \rangle = (1,1)^T$, in which case
we have $\Sigma = \begin{pmatrix}
  p_1  + p_2 & 0  \\
  0 & p_3-p_4 
 \end{pmatrix}$ and $\Lambda = \begin{pmatrix}
  p_1 + p_3 & 0  \\
  0 & p_2 -p_4 
 \end{pmatrix}$. 
The condition $\Sigma, \Lambda \succ 0$ then gives $p_4 < p_3, p_2$. 
The condition on the spectral radius gives by a straightforward calculation of the eigenvalues of $\Lambda^{-1} \tilde{\Phi}^T \Sigma^{-1} \tilde{\Phi}$ the condition in Eq.(\ref{eq:chsh-face}). This leaves the possibility that either $p_4 = p_1$ or $p_4 < p_1$. When $p_4 = p_1$, the condition Eq.(\ref{eq:chsh-face}) reduces to $p_4 = p_1 = 0$ which are the trivial inequalities with $w_c = 1$.
\end{proof}
Note that analogous conditions hold when $p_4$ is not the minimum, for instance when $\min\{p_i\} = p_2$ we have the equivalent condition $(p_2 p_3 + p_1 p_4)^2 < (p_1 - p_2)(p_1 + p_3)(p_4 - p_2)(p_3 + p_4)$. Also, note that while Eq.(\ref{eq:chsh-corr}) also defines an edge of the $8$-dimensional quantum set $\mathcal{Q}$, there may also exist other inequalities involving local terms for the Bell polytope here that define higher-dimensional faces of $\mathcal{Q}$.

\section{Single-system binary outcome inequalities without quantum violation}
\label{sec:cut-poly}

{The main result we present in this section is tight noncontextuality inequalities that have no quantum violation and cannot be recovered by the CE principle (see definition \ref{defCE}).} Formally, it is stated as follows
\begin{thm}
{There exist tight  nontrivial noncontextuality inequalities (facets of the non-contextual polytope) with no quantum violation (but violated in general non-disturbing theories) that, moreover, are not in the form of a constraint imposed by the Consistent Exclusivity principle. }
\end{thm}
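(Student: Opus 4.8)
The plan is to establish the theorem constructively by exhibiting a concrete contextuality scenario together with an explicit noncontextuality inequality that simultaneously (i) is facet-defining for the non-contextual polytope, (ii) admits no quantum violation, (iii) is violated by some no-disturbance behaviour, and (iv) is not a $\mathrm{CE}^1$ inequality in the sense of Definition \ref{defCE}. My overall strategy exploits the identification, announced in the introduction of this section, between the full-correlation non-contextual polytope $\mathrm{FC\text{-}NC}(G)$ arising from a compatibility graph $G$ and the cut polytope $\mathrm{CUT}(\nabla G)$ of the suspension graph $\nabla G$. This correspondence lets me import facets of well-studied cut polytopes directly as candidate tight noncontextuality inequalities, rather than computing the facet structure from scratch.

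First I would fix a specific compatibility graph $G$ whose suspension $\nabla G$ has a cut polytope with a \emph{known} non-trivial facet that is not of the triangle-inequality (and hence not of the $\mathrm{CE}^1$) type; the pentagon/odd-cycle or a small complete-graph instance are natural candidates, since the hypermetric and clique-web inequalities of $\mathrm{CUT}(\nabla G)$ are catalogued in \cite{DL97}. Second, using the polytope isomorphism I would translate this cut-polytope facet into an explicit inequality on the full correlators $\mean{M_i M_j}$, and verify that it defines a facet of $\mathrm{FC\text{-}NC}(G)$ (and therefore, by the remark at the end of the Preliminaries, a facet of the full NC polytope). Third, to show no quantum violation, I would appeal to the XOR-game / linear-game machinery developed in Section \ref{SECIII}: single-system full-correlation inequalities on dichotomic observables have the same algebraic form as the correlation Bell inequalities, so the sufficient condition of Proposition \ref{prop:qeqc} (maximum singular vectors composed of roots of unity) or the necessary-and-sufficient XOR condition of \cite{RKM+14} can be invoked to certify $\omega_q = \omega_c$ for the chosen inequality. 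Fourth, a no-disturbance behaviour violating the inequality is easy to produce, since the cut polytope is strictly contained in the metric/correlation polytope and the no-disturbance set is correspondingly larger than the quantum set.

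The final and most delicate step is (iv): proving that the chosen facet is genuinely \emph{beyond} $\mathrm{CE}^1$. Here I would characterise the full polytope cut out by all $\mathrm{CE}^1$ inequalities (Definition \ref{defCE}) for the scenario $G$ — that is, the constraints $\sum_{v\in S}P(v)\le 1$ over all cliques $S$ of mutually orthogonal events — and show that the candidate facet is not implied by them, equivalently that there is a behaviour satisfying every $\mathrm{CE}^1$ inequality yet violating the candidate facet. Concretely I would look for a vertex or extreme point of the $\mathrm{CE}^1$ polytope lying strictly outside the non-contextual polytope on the correct side of the candidate hyperplane, which demonstrates that $\mathrm{CE}^1$ fails to reproduce this face of the quantum set.

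The hard part will be this last containment argument: I must rule out that the tight noncontextuality inequality is a \emph{consequence} (a non-negative combination, after accounting for normalization) of the exponentially many $\mathrm{CE}^1$ inequalities, not merely that it differs syntactically from a single one. The cleanest route is a separating-behaviour witness — an explicit probabilistic model feasible for all $\mathrm{CE}^1$ constraints but infeasible for the candidate facet — which converts the combinatorial implication question into a single linear-feasibility check that can be verified by hand or by a small computation in the chosen low-dimensional scenario.
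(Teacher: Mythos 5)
Your overall architecture matches the paper's: identify the NC polytope with $\mathrm{CUT}(\nabla G)$, import a known non-triangle facet (the pentagonal inequality of $\mathrm{CUT}(K_5)=\mathrm{CUT}(\nabla K_4)$ is exactly the paper's example), characterise the $\mathrm{CE}^1$ constraints, and conclude. Your worry in step (iv) about the inequality being a nonnegative combination of $\mathrm{CE}^1$ constraints is also legitimate but is resolved for free: the paper shows (Proposition \ref{LOsets}) that the only nontrivial $\mathrm{CE}^1$ inequalities on $K_n$ are the triangle inequalities, and since the pentagonal inequality is itself a \emph{facet} of the NC polytope distinct from the triangle facets, it cannot be implied by them; no separating behaviour needs to be constructed.

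The genuine gap is step (iii), the no-quantum-violation claim. You propose to certify $\omega_q=\omega_c$ by the XOR-game machinery of Section \ref{SECIII} (Proposition \ref{prop:qeqc} or the condition of \cite{RKM+14}). That machinery is built for \emph{bipartite} correlation inequalities, where the quantum value is governed by Tsirelson's theorem and the tensor-product structure; a single-system full-correlation inequality on a complete compatibility graph is not of that form, and there is no reason the singular-vector condition should hold for the pentagonal game matrix. Indeed, viewed as a correlation inequality over an elliptope of unit vectors, the pentagonal facet of $\mathrm{CUT}(K_5)$ \emph{is} violated by vector (hence bipartite quantum) strategies, so this route would not merely be hard — it would fail. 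The paper's argument is structural and much simpler: in the scenario $K_n$ all measurements are pairwise compatible, and pairwise commuting projective measurements are jointly measurable, so every quantum behaviour admits a joint distribution and the quantum set \emph{coincides} with the non-contextual polytope. Consequently every facet of $\mathrm{CUT}(\nabla K_n)$ automatically has no quantum violation, with nothing to compute. Without this observation your proof of (iii) does not go through, and it is precisely the point that makes the construction work only in the single-system (non-Bell) setting.
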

 We focus on dichotomic measurements, where the output labels are taken as $+1$ and $-1$, and contextuality scenarios where the compatibility structure has at most two compatible measurements in each context, and therefore can be represented by a compatibility graph. We will first present two results from which the main theorem follows.

Finding all the facets of the noncontextual polytope for a given scenario can be accomplished by algorithms such as PORTA \cite{PORTA} 
and cdd \cite{cdd} for small scenarios, whereas for more {complex ones the problem becomes computationally intractable. 
Instead of applying these algorithms,} {in this work} {we obtain noncontextuality inequalities by exploiting the connection between the 
\textit{cut polytope} \cite{DL97} of a compatibility graph and its noncontextual polytope. 
Such a connection was first established by Avis \textit{et al} \cite{ito} in Bell scenarios, and here we make the explicit statement 
for general contextuality ones.} {This will allow us to obtain general statements for contextuality by exploring some established results on cut polytopes.}
 
 {Given a graph $G=(V,E)$, a  cut of $G$ is defined by a subset $S \subset V$, and consists of those edges joining a vertex of S to a vertex not in S. The edge incidence vector of a cut is a binary vector $\vec{x}^S$ of size $|E|$. Its components are: $\vec{x}^S(j) = 0$ if edge $j$ is not present in the cut, and $\vec{x}^S(j) = 1$ otherwise.  The cut polytope of G is the convex hull of the edge incidence vectors: $\{ \vec{x}^S | \text{S is a cut} \}$}.
 
We first explore a remark made by Avis \textit{et al} that one can define the cut polytope  in terms of \textit{anti-correlations} 	
\begin{equation}
 \Xs_{\mi \mj}:= p(a\neq b|ij),
 	\end{equation}  	
{where $\Xs_{ij}$ is short for $ 	\vec{x}^S(\{i,j\})$. That is, the FC-NC (full correlation non-contextual) polytope of $G$ and the cut polytope of $G$ are isomorphic, where the bijection between the two sets is given by the affine transformation 
 \begin{equation}
			  \mean{M_iM_j}=1-2x_{ij}.
 \end{equation}
	Now we will show that a similar statement can be made between the NC-polytope of $G$ and the cut polytope of the \textit{suspension graph} of $G$. Recall that the { suspension graph} $\nabla G  := (V',E')$ of a graph G is obtained from G by adding one new vertex $\nvs$ which is adjacent to all other vertices of $G$. 	
 	}

\begin{prop}\label{prop:cut}
Let $G(V,E)$ be a compatibility graph representing a contextuality scenario. {The NC polytope of this scenario is isomorphic to the cut polytope of the suspension graph $\nabla G $, via the affine transformations:
\begin{align}	\label{affine}
		\mean{M_iM_j}=&1-2x_{ij}  \nonumber \\
		\mean{M_i}=& 1-2x_{iO	}.
\end{align}}
\end{prop}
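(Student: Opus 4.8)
The plan is to prove the isomorphism first at the level of vertices and then lift it to the full polytopes, using the fact that both objects are convex hulls of their respective vertex sets and that the stated map is an affine bijection of the ambient spaces. First I would fix the coordinate systems. The NC polytope lives in $\mathbb{R}^{|V|+|E|}$, with coordinates the single correlators $\mean{M_i}$ (one per vertex $i\in V$) and the full correlators $\mean{M_iM_j}$ (one per edge $\{i,j\}\in E$), while $CUT(\nabla G)$ lives in $\mathbb{R}^{|E'|}$ with one coordinate $x_e$ per edge of $\nabla G$. Since the suspension adds exactly one vertex $O$ adjacent to every $i\in V$, the edges of $\nabla G$ are the original edges $\{i,j\}\in E$ together with the $|V|$ new edges $\{i,O\}$; hence $|E'|=|E|+|V|$ and the two ambient spaces have equal dimension. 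The stated transformations act coordinatewise as $x\mapsto 1-2x$, which is invertible, so they define an affine isomorphism $T$ of the ambient spaces, identifying the $x_{ij}$ with $\mean{M_iM_j}$ and the $x_{iO}$ with $\mean{M_i}$. It then remains only to check that $T$ carries the vertices of $CUT(\nabla G)$ bijectively onto the vertices of the NC polytope.

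Next I would recall the two vertex descriptions. The vertices of the NC polytope are the deterministic noncontextual assignments, each of which fixes a value $s_i\in\{+1,-1\}$ for every measurement $M_i$ and produces the point $\mean{M_i}=s_i$, $\mean{M_iM_j}=s_is_j$. The vertices of $CUT(\nabla G)$ are the cut vectors $\vec{x}^S$ of subsets $S\subseteq V'=V\cup\{O\}$, with $x_e=1$ iff $e$ separates $S$ from its complement. Because a cut is invariant under $S\mapsto V'\setminus S$, I would normalise once and for all to $O\notin S$; since $\nabla G$ is connected (every vertex sees $O$), distinct subsets $S\subseteq V$ then give distinct cuts, so there are exactly $2^{|V|}$ of them, matching the $2^{|V|}$ deterministic assignments.

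The core step is the explicit matching. Given a cut with $O\notin S$, the edge $\{i,O\}$ crosses the cut precisely when $i\in S$, so $T$ yields $\mean{M_i}=1-2x_{iO}$, equal to $+1$ when $i\notin S$ and $-1$ when $i\in S$; setting $s_i:=+1$ for $i\notin S$ and $s_i:=-1$ for $i\in S$ recovers an arbitrary deterministic assignment, and this identification is a bijection between normalised cuts and assignments. For an edge $\{i,j\}\in E$ one verifies that $x_{ij}=1$ exactly when $i$ and $j$ lie on opposite sides of the cut, i.e. when $s_is_j=-1$, so $T$ sends $x_{ij}$ to $1-2x_{ij}=s_is_j=\mean{M_iM_j}$, consistent with the single-correlator identification. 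Thus $T$ maps each cut vertex to the corresponding deterministic NC vertex, and this is a bijection of vertex sets. Since $T$ is an affine bijection of the ambient spaces restricting to a bijection of the finite vertex sets, and both polytopes equal the convex hull of their vertices, $T$ restricts to the desired affine isomorphism $CUT(\nabla G)\to\mathrm{NC}(G)$.

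I expect the only genuinely delicate point to be the bookkeeping around the suspension vertex: one must confirm that fixing $O\notin S$ loses nothing (using the cut/complement symmetry together with connectivity of $\nabla G$) and that the new edges $\{i,O\}$ are precisely what is needed to encode the single correlators $\mean{M_i}$, which have no analogue in the plain cut polytope $CUT(G)$ used earlier for the full-correlation polytope. Everything else reduces to routine checking of the sign conventions in $x\mapsto 1-2x$.
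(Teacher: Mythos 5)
Your proof is correct and takes essentially the same route as the paper's: both arguments exhibit the affine map as a bijection between the edge incidence vectors of cuts of $\nabla G$ and the deterministic noncontextual assignments, and then pass to convex hulls. Your normalization $O\notin S$ with $i\in S$ iff $s_i=-1$ is simply the complement of the cut the paper writes down (and in fact keeps the single-correlator signs consistent, whereas the paper's stated reverse construction introduces a harmless global sign flip), so the substance of the two proofs is identical.
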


\begin{proof}

{We will show that every edge incidence vector of $\nabla G $ defines a deterministic non-contextual behaviour and vice-versa, from which the statement about the identity of their convex hulls follows. }

{First arrange the components of $\vec{x}^S$ as follows: the first $n$ components relate to the edges of type
$  E_0 \{ \{\nvs, v\} | v \in V \}$, i.e. are $\Xs_{\nvs \, i}$ \,,  and the others to those of the type $E = \{ \{v, w\} | v,w \in V \}$, i.e. are  $\Xs_{ij}$.} 

{On the one hand, to show that the edge incidence vector of a cut S provides a deterministic non contextual 
model for the $\Xs_{ij}$, we only need to prove that the model is noncontextual, that is, that it factorises properly at the level 
of correlators via the map \eqref{affine}. What we mean by this is that we should prove that $\mean{M_i M_j} = \mean{M_i} \mean{M_j}$, i.e. $1 - 2 \, \Xs_{ij} = (1 - 2 \, \Xs_{\nvs\,i}) (1 - 2 \, \Xs_{\nvs\,j})$. For this note that there are only 8 ways of separating the vertices $i$, $j$ and $\nvs$ into groups of `in S’ and `outside S', and a case by case inspection proves the equality.}

{On the other hand, now we need to prove that every deterministic non contextual models is 
mapped to an edge incidence vector of a cut S. That is, for each non contextual deterministic strategy we need to find a cut S. One way to do is it the following: Let S be the set of vertex $\nvs$ plus all the vertices that denote measurements that give outcome $-1$. It is easy to check that the anti-correlations $p(a\neq b|ij)$ coincide with the edge incidence vector of the cut S.}
\end{proof}

Considerable effort has been devoted to characterising the cut polytopes of complete graphs. Many general inequalities are known \cite{DL97}, and we have the complete list of tight inequalities for complete graph with seven or less measurements \cite{k7}. Moreover, it is conjectured that a complete characterisation is known for the case of eight and nine \cite{k89} measurements (all these inequalities can be also found in the website \cite{site}). 

Now we move on to presenting our examples. Consider the case where the compatibility graph is $G=K_n$, i.e. a complete
graph where all the $n$ measurements are pairwise compatible, { but not necessarily jointly compatible}.
When studying quantum correlations, since we are working with projective measurements, {they are indeed jointly measurable}, and therefore these are scenarios where the quantum set coincides with the noncontextual polytope. That is, all the facets of the cut polytope $CUT(\nabla K_n)$ are tight noncontextuality inequalities with no quantum advantage.

{The question of interest now is whether these tight inequalities with no quantum violation are CE inequalities. 
{ In what follows, we will first translate the question in the hypergraph framework of \cite{AFLS} and then characterise the constraints imposed by CE.}

{In K$_n$, the contexts of compatible measurements have two elements, and the quantity of interest is the behaviour $P(ab|\mi \mj)$, where $a,b \in \{-1,1\}$ and $\mi \neq \mj \in \{M_1, \ldots, M_n\}$. These $M_i$ are the measurements in the contextuality scenario, which in this dichotomic case we will also refer to as observables. This allows to interpret the compatibility graph K$_n$ as a hypergraph $H_n$ as follows. In the language of \cite{AFLS}, each context may be considered as a measurement with output set $\{ (-1,-1),(-1,1),(1,-1),(1,1) \}$. Hence, the set of events that characterize the contextuality scenario $H_n$ is $V(H) = \{(ab|\mi \mj) | a,b \in \{-1,1\} \, , \mi, \mj \in \{M_1, \ldots, M_n\} \land \mi<\,j \}$. The hyperedges of $H_n$ correspond to either normalization conditions or no-disturbance conditions. Hence, 
\begin{widetext} 
\begin{equation*}
\begin{aligned}
E = & \big\{ \{(-1,-1|\mi, \mj),(-1,1|\mi, \mj),(1,-1|\mi, \mj),(1,1|\mi, \mj)\} | \mi, \mj \in \{M_1, \ldots, M_n\} \land \mi < \mj \big\} \\
& \cup \big\{ \{(-1,-1|\mi, \mj),(-1,1|\mi, \mj),(1,-1|\mi, \mk),(1,1|\mi, \mk)\} | \mi, \mj, \mk \in \{M_1, \ldots, M_n\} \land \mi \neq \mj \neq \mk \big\}.
\end{aligned}
\end{equation*}
\end{widetext} }

{It is easy to see that for every hyperedge $e \in E$, $\sum_{v \in e} P(v) = 1$, and that the no-disturbance principle is satisfied. Note also that the (family of) hyperedges of the form $\{(-1,-1|\mi,\mj),(-1,1|\mi,\mj),(1,-1|\mi, \mk),(1,1|\mi,\mk)\}$ corresponds (are isomorphic) to the following sequential protocol: first measure $\mi$, if the outcome is $a=-1$ measure then $\mj$, and otherwise $\mk$. The no-disturbance condition assures that these sequential measurements are complete. These type of hyperedges are the so called ``measurement protocols'' in appendix D of \cite{AFLS}.}

As mentioned in section \ref{Preliminaries}, two events are exclusive if and only if they belong to a same hyperedge 
$e \in E$. In what follows, {we characterise the CE inequalities that arise in these scenarios.}

\begin{prop}\label{LOsets} 
{All nontrivial $CE^1$ inequalities are given by: 	
\begin{equation}\label{cecons}
	\begin{aligned}
	-\mean{M_\mi M_\mj} - \mean{M_\mj M_\mk} - \mean{M_\mk M_\mi} &\leq 1 \\
	-\mean{M_\mi M_\mj} + \mean{M_\mj M_\mk} + \mean{M_\mk M_\mi} &\leq 1 \\
	\mean{M_\mi M_\mj} - \mean{M_\mj M_\mk} + \mean{M_\mk M_\mi} &\leq 1 \\
	\mean{M_\mi M_\mj} + \mean{M_\mj M_\mk} - \mean{M_\mk M_\mi} &\leq 1 
	\end{aligned},
	\end{equation}}
		for every choice of three measurements $M_i,M_j,M_k$. 
\end{prop}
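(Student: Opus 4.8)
The plan is to view each $CE^1$ inequality (Definition \ref{defCE}) as attached to a set $S$ of pairwise-exclusive events of $H_n$, i.e.\ to a clique in the exclusivity graph on $V(H_n)$, and to note that only the \emph{maximal} cliques can give nontrivial (facet-defining) constraints. First I would read the exclusivity relation directly off the two families of hyperedges: the normalization hyperedges make any two distinct events of a common context $\{i,j\}$ exclusive, while the measurement-protocol hyperedges $\{(-1,-1|i,j),(-1,1|i,j),(1,-1|i,k),(1,1|i,k)\}$ make an event assigning $M_i=-1$ exclusive to an event assigning $M_i=+1$ whenever they share the measurement $M_i$. Collecting both, I expect to prove the clean statement that two events are exclusive \emph{iff} they assign opposite outcomes to some common measurement. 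I would also record the parametrization of any non-disturbing behaviour,
\begin{equation}
P(ab|ij)=\tfrac{1}{4}\left(1+a\,\mean{M_i}+b\,\mean{M_j}+ab\,\mean{M_iM_j}\right),
\end{equation}
which turns every $CE^1$ inequality $\sum_{v\in S}P(v)\le 1$ into a linear inequality in the correlators.

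Next I would exploit the combinatorial structure. Identifying an event with the edge $\{i,j\}$ of the complete graph $K_n$ decorated by its two outcomes, the requirement that the events of a clique be \emph{pairwise} exclusive forces, in particular, that their underlying edges pairwise intersect. By the elementary classification of pairwise-intersecting families of edges, such a family is either a \emph{star} (all edges passing through one common measurement) or the three edges of a \emph{triangle} $\{i,j\},\{j,k\},\{k,i\}$. This dichotomy reduces the infinitely-many-looking enumeration of cliques to two cases.

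For the star case I would argue that all events share a fixed measurement $M_i$; grouping them by the value they assign to $M_i$ and using no-disturbance, the sum telescopes and is bounded by $P(M_i=+1)+P(M_i=-1)=1$, so every such inequality is implied by normalization and positivity, hence trivial (this also covers the full-context clique, which is merely normalization). For the triangle case I would first show that a genuine triangle clique carries exactly one event per edge --- placing two events on one edge would make it impossible for any event on the opposite edge to conflict with both, collapsing the configuration into a star. Substituting the parametrization into $P(e_{ij})+P(e_{jk})+P(e_{ki})\le 1$, the single-party correlators $\mean{M_i},\mean{M_j},\mean{M_k}$ cancel identically, leaving
\begin{equation}
\alpha\,\mean{M_iM_j}+\beta\,\mean{M_jM_k}+\gamma\,\mean{M_kM_i}\le 1,\qquad \alpha\beta\gamma=-1,
\end{equation}
where $\alpha,\beta,\gamma\in\{\pm1\}$ are fixed by the chosen outcomes. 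The four sign patterns with an odd number of minus signs are precisely the four inequalities in (\ref{cecons}), and the eight admissible labelings collapse onto these four (a labeling and its global flip give the same coefficients). Together with the triviality of the star case, this establishes that the nontrivial $CE^1$ inequalities are exactly (\ref{cecons}).

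I expect the main obstacle to be the careful bookkeeping of the exclusivity graph rather than any deep estimate: pinning down the exclusivity relation from the protocol hyperedges (taking due care of the $i<j$ labeling convention for events), and then ruling out the degenerate and over-full cliques so that the only surviving nontrivial case is the three-event triangle. Once the star/triangle dichotomy is in hand, the correlator computation --- in particular the clean cancellation of the single-party terms --- is routine.
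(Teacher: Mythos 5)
Your proposal is correct and follows essentially the same route as the paper: enumerate the maximal sets of pairwise-exclusive events, show that the only configurations are stars (which reduce to normalization plus no-disturbance, hence trivial) and triangles with one event per edge (which yield the four odd-sign correlator inequalities). Your star/triangle dichotomy for intersecting edge families and the explicit parametrization $P(ab|ij)=\tfrac14(1+a\langle M_i\rangle+b\langle M_j\rangle+ab\langle M_iM_j\rangle)$ are a slightly cleaner packaging of the paper's case count over the number of contexts, but the substance of the argument is the same.
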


\begin{proof}

The proof consists of two parts. First, we show that the maximum sets of orthogonal events in $H_n$ have size four, and correspond to the trivial constrains given by normalization and no-disturbance conditions. Finally, we show that the non-trivial maximal sets of orthogonal events in $H_n$ have size $3$ and give rise to such CE inequalities.

Let $S$ be a set of mutually orthogonal events on $H_n$. Note that for every two events in $S$, the corresponding contexts must have at least one observable in common. Hence, we divide the proof by cases. 

First, consider the case where all the events in $S$ relate to the same context, which we denote by $\mi\mj$. Hence, the maximal set is $\{ (ab|\mi\mj) : a,b \in \{-1,1\} \}$, where all possible values for the outcomes are allowed.

Second, consider the case where two contexts are encompassed in the events of $S$, i.e. these may have either the form $(\cdot | \mi\mj)$ or $(\cdot | \mi\mk)$. Without loss of generality, suppose $(ab|\mi\mj) \in S$. Then, $(\bar{a}c|\mi\mk) \in S$ for any $c$. Hence, the maximal sets in this case have the form $\{(ab|\mi\mj),(a\bar{b}|\mi\mj),(\bar{a}c|\mi\mk),(\bar{a}\bar{c}|\mi\mk)\}$ for fixed $a,b,c$, i.e. correspond to measurement protocols over K$_n$.

Third, consider the case where three contexts are encompassed in the events of $S$. One option is to take the contexts as $\mi\mj$, $\mi\mk$ and $\mj\mk$. Without loss of generality, suppose $(ab|\mi\mj) \in S$. Then, for an event with $\mi\mk$ to be orthogonal to $(ab|\mi\mj)$, it should have outputs $(\bar{a}c|\mi\mk)$, for any value of $c$. Similarly for $\mj\mk$, we have $(\bar{b}d|\mj\mk)$, for any value of $d$. However, if we want both events to belong to $S$, they should further satisfy $(\bar{a}c|\mi\mk) \perp (\bar{b}d|\mj\mk)$, which implies $d=\bar{c}$. Hence, the maximal sets $S$, characterized by the parameters $a,b,c,\mi,\mj,\mk$, consist of three elements and read
$$
S = \{ (ab|\mi\mj), (\bar{a}c|\mi\mk),  (\bar{b}\bar{c}|\mj\mk)\}.
$$
The other option is to take the contexts as $\mi\mj$, $\mi\mk$ and $\mi\ml$, i.e. they all have $\mi$ as the common observable. It's easy to check that, since the observables are binary, there is no possible assignment of outputs for which the events are pairwise orthogonal. Hence, this choice of three contexts does not give rise to any set of mutually exclusive events. 

Finally, we are left with the case where four or more contexts are encompassed in the events of $S$. In this case, since every pair of contexts should have one observable in common, the only possibility is that all of them share the same observable, i.e. $\mi \mj_1,\ldots, \mi \mj_k$. However, since these are binary observables, it is not possible to assign outputs in a way that assures the pairwise orthogonality of every pair of events with different contexts. Hence, such a situation may never give rise to a set of mutually exclusive events. 

Now we move on to unravel the constraints that CE$^1$ imposes over the space of events in $H_n$. For the first case, those are $\sum_{a,b} P(ab|\mi\mj) \leq 1$, which is trivially satisfied due to normalization of the behaviour $P$. For the second case, the constraints take the form $\sum_b P(ab|\mi\mj) + \sum_c P(\bar{a}c|\mi\mk) \leq 1$, which is guaranteed by the no-disturbance condition.
Indeed, the no-disturbance condition implies $\sum_c P(\bar{a}c|\mi\mk) = \sum_c P(\bar{a}c|\mi\mj)$, hence the statement follows from normalization. Finally, in the third case the CE$^1$ constraint reads $P(ab|\mi\mj)+P(\bar{a}c|\mi\mk)+P(\bar{b}\bar{c}|\mj\mk) \leq 1$, which is a nontrivial CE inequality.

All the possible choices of the outcomes $a,b,c$ will give rise to all the CE inequalities for each choice of three measurements $M_i,M_j,M_k$. When translated into the correlators $\mean{M_\mi M_\mj}$ they take the form of eq.~ \eqref{cecons}. 

\end{proof}
{Note that this lemma is a peculiarity of contextuality scenarios which do not arise from Bell Scenarios. Indeed, due to the local structure of the latter, the orthogonality between $(ab|\mi\mj)$ and $(\bar{b}\bar{c}|\mj\mk)$ is not defined for $\mi \neq \mj  \neq \mk$. Hence, there is no contradiction with the fact that LO$^1\equiv$ NS for bipartite Bell Scenarios \cite{LO}. }

{We have seen that the inequalities that define the set of behaviors that satisfy CE$^1$ have a very simple form. We remark that the inequalities presented at eq \eqref{cecons} are also known as \textit{triangle inequalities} in the cut polytope literature (see chapter 27 of \cite{DL97}) and they were already explored in noncontextuality \cite{triangle}. Also, they represent all tight NC-inequalities in the the complete compatibility graph scenario with three vertices \cite{liang2,AQB+13}.}
	
	 However, the set of noncontextual behavious in $K_n$ (which coincides with the quantum set) is defined by the facets of the CUT polytope of $\nabla K_n$ (see  \cite{site} for a list of all known cut polytope inequalities for complete graph scenarios with less than ten measurements ). Therefore, facet defining inequalities for $CUT(\nabla K_n)$ which are not of triangular form are tight noncontextuality inequalities with no quantum violation and not in the form of a Consistency Exclusivity constraint.

One example of this is the \textit{pentagonal inequality} 
	\begin{equation}
	\sum_{1\leq i<j<n} b_i b_j x_{ij} \leq 0,
	\end{equation}
	with $b_1=b_2=b_3=1, b_4=b_5=-1$, which is facet defining for the cut polytope of a complete graph with five vertices.
	Let us associate vertex $5$ with the additional vertex $\nvs$ in $\nabla K_4$. A direct application of the affine map \eqref{affine} provides us the tight noncontextuality inequality	
		\begin{equation}\label{ineq:pent}
		\sum_{1\leq i<j<n} -b_i b_j \mean{M_iM_j} \leq 2,  \; 
		\end{equation}
with $b_1=b_2=b_3=1, b_4=b_5=-1$, where $M_5 = \openone$. Direct inspection shows that this inequality is not in the form of CE, and therefore provides one explicit example of a tight noncontextuality inequality without quantum violation we are searching for. 

In addition to the database presented in \cite{site}, chapters 28-30 of \cite{DL97} present a family of inequalities for the cut polytope of any complete graph.
One class is given by the \textit{hypermetric inequalties} that are respected by any point in the cut polytope of $K_n$
	\begin{equation}
			\sum_{1\leq i<j<n} b_i b_j x_{ij} \leq 0, \; \text{where } b_i\in \mathbb{Z} \text{ and } \; \sum_i b_i=1.
		\end{equation}
Triangle inequalities and the pentagonal one are particular examples of hypermetric inequalities. 
For the hypermetric inequalities to be tight conditions need to be imposed on the coeffients $b_i$; some of these are known (see chapter 28 of \cite{DL97}). For instance, the inequalities are tight whenever $b_1=b_2=\ldots=b_{n-2}=1,\; b_{n-1}=-1$ and $b_n=4-n$ (theorem 28.2.4 (iiib) of ref \cite{DL97}).
As a remark, note that in general theories that satisfy the no-disturbance condition one can construct boxes where $\mean{M_iM_j} = \pm 1$ while $ \mean{M_i}= 0$, such that they achieve the maximum algebraic value of these inequalities.	 

\section{Conclusion}
In this paper, we presented progress toward the resolution of the question whether there are tight two-party Bell inequalities that are also facets 
of the quantum set of correlations, focusing on the paradigmatic cases of XOR games and linear games. 
We formulated a simple sufficient condition to
identify and construct games with no quantum advantage within these classes. We then showed that the well-known class of Bell inequalities corresponding
to non-local computation tasks are not tight in both the binary outcome case and in the generalization to arbitrary prime outputs for a special class of 
functions, i.e., they do not define facets of the corresponding Bell polytopes. We then identified correlation inequalities that define faces of the set 
of quantum correlations in the simplest Bell scenario corresponding to the CHSH polytope.

Then we moved on to general contextuality scenrios, and studied the question in this single-party scenario, namely non-contextuality inequalities 
with binary outcomes. Here we identified the polytope of non-contextual behaviours with the cut polytope of the suspension graph of the compatibility 
graph representing the measurements in the experiment. Moreover, we characterised the polytope of behaviours that satisfy consistent exclusivity 
inequalities for compatibility graphs that are complete. We then found that, in this particular case (which is not a Bell scenario) there are tight 
non-contextuality inequalities with no quantum advantage that are not in the Consistent Exclusivity form. An open problem is to find an interpretation of 
these inequalities as an information theoretic task. 

Finally, two main questions are still open: whether the two-party correlation polytope shares any facets with the Tsirelson elliptope and whether 
the two-party Bell polytope shares any facets with the quantum correlation set.

\textit{ Acknowledgements.} We are grateful to P. Horodecki and M. Horodecki for useful discussions,
R.R. thanks J.-D. Bancal and Y.-C. Liang for asking the question answered in Section \ref{sec:CHSH-face}. 
R. R. is supported by the ERC AdG grant QOLAPS.
M.T.Q. acknowledges financial support from the Swiss National Science Foundation (grant PP00P2 138917, Starting grant DIAQ).
A.B.S. acknowledges the support from the ERC AdG NLST. 
G. M. acknowledges support from NWO VIDI, ERC Starting Grant and the Brazilian agency Fapemig.
R. A. acknowledges support from the European Union's Horizon 2020 research and innovation programme under the Marie Sk\l{}odowska-Curie 
N-MuQuas Grant No. 705109.


\begin{thebibliography}{99}

\bibitem{Bell}J. S. Bell, Physics \textbf{1}, 195 (1964).


\bibitem{securekey} A. K. Ekert, Phys. Rev. Lett. \textbf{67}, 661 (1991); A. Ac\'in \textit{et al.}, 
Phys. Rev. Lett. \textbf{98}, 230501 (2007).

\bibitem{CommCompl}H. Buhrman \textit{et al.}, Rev. Mod. Phys. \textbf{82}, 665 (2010).

\bibitem{rand}S. Pironio et al., Nature (London) \textbf{464}, 1021 (2010); R.
Colbeck, Ph.D. thesis, University of Cambridge, 2007; 
R. Colbeck and R. Renner, Nat. Phys. \textbf{8}, 450 (2010).

\bibitem{KS} Simon Kochen and Ernst P. Specker, J. Math. Mech. 17, 59–87 (1967).

\bibitem{AB} Samson Abramsky and Adam Brandenburger, New Journal of Physics 13, no. 11, 113036 (2011).
 
\bibitem{AFLS} A. Ac\'in, T. Fritz, A. Leverrier, A.~ B. Sainz, Comm. in Mat. Phys. vol. 334, pp 533-628 (2015).

\bibitem{contcomp} M. Howard, J.~ J. Wallman, V. Veitch, and J. Emerson, Nature 510, 351 (2014). R. Raussendorf, D.~ E. Browne, N. Delfosse, C. Okay, and J. Bermejo-Vega, arXiv:1511.08506 (2015). N. Delfosse, P. Allard Guerin, J. Bian, and R. Raussendorf, Phys. Rev. X 5, 021003 (2015). 

\bibitem{Gill}
R. Gill. 
\textit{Problem 26. Bell inequalities holding for all quantum states.} 
O. Kr\"{u}ger and R. F. Werner, editors,
\textit{Some Open Problems in Quantum Information Theory}, 
arXiv:quant-ph/0504166 (2005). 

\bibitem{AII06}
D. Avis, H. Imai and T. Ito,
J. of Phys. A: Math. and General \textbf{39}, 36 (2006). 

\bibitem{GYNI}
M. L. Almeida, J.-D. Bancal, N. Brunner, A. Ac\'{i}n, N. Gisin and S. Pironio,
Phys. Rev. Lett. \textbf{104}, 230404 (2010).

\bibitem{LO}T. Fritz \textit{et al.}, Nat. Comm. \textbf{4}, 2263 (2013); A. B. Sainz \textit{et al.}, 
Phys. Rev. A \textbf{89}, 032117 (2014).

\bibitem{facets}
\textit{www.faacets.com}. 

\bibitem{unb-viol}
H. Buhrman, O. Regev, G. Scarpa, and R. de Wolf. 
Theory Comput., \textbf{8}:623 (2012); 
M. Junge and C. Palazuelos. 
Comm. in Math. Phys., 1 (2011);
M. Junge, C. Palazuelos, D. P\'{e}rez-Garc\'{ı}a, I. Villanueva, and M. M. Wolf. 
Comm.
Math. Phys., 300(3):715 (2010).

\bibitem{spekkens} R. W. Spekkens, D. H. Buzacott, A. J. Keehn, B. Toner,
and G. J. Pryde, Phys. Rev. Lett. 102, 010401 (2009).

\bibitem{adanrays} A. Cabello, J.~ M. Estebaranz, and G. Garc\'ia-Alcaine, Phys. Lett. A 212 (1996). A. Cabello, Phys. Rev. Lett. 101, 210401 (2008).

\bibitem{AQB+13}
M. Ara\'{u}jo, Marco T\'{u}lio Quintino, C. Budroni, M. T. Cunha and A. Cabello,
Phys. Rev. A \textbf{88}, 022118 (2013). 

\bibitem{horodecki_measure} A. Grudka, K. Horodecki, M. Horodecki, P. Horodecki, R. Horodecki, P. Joshi, W. Kłobus, A. Wójcik, Phys. Rev. Lett. 112, 120401 (2014).

\bibitem{RAM}R. Ramanathan, R. Augusiak, and G. Murta, 
Phys. Rev. A \textbf{93}, 022333 (2016).

\bibitem{NLC}
N. Linden, S. Popescu, A. J. Short and A. Winter,
Phys. Rev. Lett. \textbf{99}, 180502 (2007).

\bibitem{DL97}
M. M. Deza and M. Laurent. 
\textit{Geometry of Cuts and Metrics}, vol. 15 of Algorithms and
Combinatorics, Springer, (1997).

\bibitem{comp-geom}
M. de Berg, M. van Kreveld, M. Overmars, O. Schwarzkopf,
\textit{Computational Geometry: Algorithms and Applications} (Third Edition), Springer-Verlag, Heidelberg (2008).  

\bibitem{Schrijver}
A. Schrijver, \textit{Combinatorial Optimization. Polyhedra and Efficiency}, Vol. A, volume 24 of Algorithms and Combinatorics, Springer-Verlag, Berlin (2003).  

\bibitem{Masanes}Ll. Masanes, 
arXiv:quant-ph/0309137.

\bibitem{PR}
S. Popescu and D. Rohrlich,
Foundations of Physics \textbf{24} (3): 379–385 (1994).

\bibitem{AHW10}
D. Avis, P. Hayden and M. W. Wilde,
Phys. Rev. A \textbf{82}, 030102(R) (2010).

\bibitem{klyachko1} A. Klyachko, arXiv:quant-ph/0206012 (2002).

\bibitem{klyachko2} A. A. Klyachko, M. A. Can, S. Binicioglu, and A. S. Shumovsky, Phys. Rev. Lett. 101, 020403 (2008).

\bibitem{planeta}  R. Chaves, T. Fritz  Phys. Rev. A 85, 032113 (2012).

\bibitem{fritz} T. Fritz, R. Chaves IEEE Trans. on Information Theory, vol. 59, pages 803 - 817 (2013).

\bibitem{Henson} J. Henson, arXiv:1210.5978 (2012).

\bibitem{cabello} A. Cabello, Phys. Rev. Lett. 110, 060402 (2013).

\bibitem{GisinCG}
 D. Collins and N. Gisin. J. Phys. A, Volume 37, Number 5, pages 1775 (2004).

 \bibitem{Wehner}S. Wehner, Phys. Rev. A \textbf{73}, 022110 (2006).

\bibitem{Tsirelson}
B. S. Cirel’son, 
Lett. Math. Phys. \textbf{4}, 83 (1980).

\bibitem{footnote1}
Note that Tsirelson's
theorem implies that the set of full-correlation quantum boxes is equivalent to the set of full-correlation boxes described by
first level of the hierarchy introduced in \cite{NPA}.

\bibitem{RKM+14}
R. Ramanathan, A. Kay, G. Murta and P. Horodecki,
Phys. Rev. Lett. \textbf{113}, 240401 (2014). 

\bibitem{Hastad}
J. H{\.a}stad, J. ACM, \textbf{48} (4): 798 (2001).

\bibitem{Khot}
S. Khot, G. Kindler, E. Mossel, and R. O’Donnell, SIAM J. Comput., 37(1): 319 (2007).

\bibitem{Liang}
Y-C. Liang, C-W. Lim and D-L. Deng,
Phys. Rev. A \textbf{80}, 052116 (2009).

\bibitem{footnote2}
One can note that the game given by Eq. \eqref{exampleNLC3} 
is described by the function $f(x,y)=x\cdot y+\delta_{x,2}+\delta_{y,2}$,  which is equivalent to $CHSH_3$ up to local terms.

\bibitem{PORTA} http://comopt.ifi.uni-heidelberg.de/software/PORTA/

\bibitem{cdd} https://www.inf.ethz.ch/personal/fukudak/old/

\bibitem{ito}  D. Avis, H. Imai, T. Ito, Y. Sasaki  Journal of Physics A: Mathematical and General, Volume 38, Number 50 (2005).

\bibitem{k7}  V.P. Grishukhin, European Journal of Combinatorics, 11, 115-117 
\bibitem{k89} T. Christof, G. Reinelt, Efficient parallel facet enumeration for 0/1-polytopes, Technical report, Institut für Angewandte Mathematik, Universität Heidelberg, 1997.

\bibitem{site} https://www.iwr.uni-heidelberg.de/groups/comopt/\\
software/SMAPO/cut/cut.html

\bibitem{triangle} P Kurzynski, D Kaszlikowski  arXiv:1309.6777 

\bibitem{liang2} Y-C Liang, R. W. Spekkens, H. M. Wiseman Physics Reports, vol 506, issues 1-2, pp 1-39 (2011).

\bibitem{NPA}M. Navascu\'es, S. Pironio, and A. Ac\'in, New J. Phys. \textbf{10},  073013 (2008).








\end{thebibliography}
\end{document}